\documentclass[aps,prb,superscriptaddress,twocolumn]{revtex4-2}
\usepackage{amsfonts, amssymb, amsmath, amsthm, dsfont, graphicx}
\usepackage{mathrsfs}
\usepackage{mathtools}
\usepackage{subfigure}
\usepackage{multirow}
\usepackage{dcolumn}
\usepackage{bm}
\usepackage{color}
\usepackage{braket}
\usepackage[usenames,dvipsnames]{xcolor}
\usepackage[all]{xy}
\usepackage[normalem]{ulem}
\usepackage{comment}
\usepackage{environ}

\usepackage{array}

\newcommand{\hc}{{\rm h.c.}}
\newcommand{\meff}{m_{\rm eff}}
\newcommand{\tR}{{\widetilde R}}
\newcommand{\mR}{{\mathcal R}}
\newcommand{\tL}{{\widetilde \Lambda}}
\newcommand{\br}{{\boldsymbol{\rho}}}

\renewcommand{\Re}{{\rm Re}}
\renewcommand{\Im}{{\rm Im}}

\renewcommand{\bf}[1]{\textnormal{\textbf{#1}}}

\newcommand{\tr}{\mathrm{Tr}}

\newtheorem{proposition}{Proposition}

\definecolor{Nathanblue}{rgb}{0.,0.24,0.51}

\newcommand{\blue}{\color{Nathanblue}}

\usepackage[colorlinks=true,allcolors=blue]{hyperref}

\newif\ifSM

\SMtrue

\ifSM
\NewEnviron{supplementalMaterials}{%

	\setcounter{figure}{0}
	\setcounter{equation}{0}
	\setcounter{table}{0}
	\setcounter{page}{1}
	\BODY
}
\else
\excludecomment{supplementalMaterials}
\fi

\makeatletter
\def\maketitle{
	\@author@finish
	\title@column\titleblock@produce
	\suppressfloats[t]}
\makeatother

\begin{document}
	
	\title{{\blue Hall viscosity from metric-sensitive dichroic probes}}
	
	\author{Alberto Nardin} 
\thanks{These authors contributed equally to this work.}
\affiliation{Universit\'{e} Paris-Saclay, CNRS, LPTMS, 91405, Orsay, France}

\author{Bruno Mera}
\thanks{These authors contributed equally to this work.}
\affiliation{Instituto de Telecomunica\c{c}\~oes and Departmento de Matem\'{a}tica, Instituto Superior T\'ecnico, Universidade de Lisboa, Avenida Rovisco Pais 1, 1049-001 Lisboa, Portugal}
\affiliation{Advanced Institute for Materials Research (WPI-AIMR), Tohoku University, Sendai 980-8577, Japan}

\author{Anaïs Defossez}
\affiliation{Laboratoire Kastler Brossel, Collège de France, CNRS, ENS-Université PSL,
Sorbonne Université, 11 Place Marcelin Berthelot, 75005 Paris, France}
\affiliation{International Solvay Institutes, B-1050 Brussels, Belgium}
\affiliation{Center for Nonlinear Phenomena and Complex Systems,
Université Libre de Bruxelles, CP 231, Campus Plaine, B-1050 Brussels, Belgium}

\author{Baptiste~Bermond}
\affiliation{Laboratoire Kastler Brossel, Collège de France, CNRS, ENS-Université PSL,
Sorbonne Université, 11 Place Marcelin Berthelot, 75005 Paris, France}

\author{Tomoki Ozawa}
\affiliation{Advanced Institute for Materials Research (WPI-AIMR), Tohoku University, Sendai 980-8577, Japan}
\affiliation{RIKEN Center for Interdisciplinary Theoretical and Mathematical Sciences (iTHEMS), RIKEN, Wako, Saitama 351-0198, Japan}

\author{Nathan Goldman}
\email{nathan.goldman@lkb.ens.fr}
\affiliation{Laboratoire Kastler Brossel, Collège de France, CNRS, ENS-Université PSL,
Sorbonne Université, 11 Place Marcelin Berthelot, 75005 Paris, France}
\affiliation{International Solvay Institutes, B-1050 Brussels, Belgium}
\affiliation{Center for Nonlinear Phenomena and Complex Systems,
Université Libre de Bruxelles, CP 231, Campus Plaine, B-1050 Brussels, Belgium}
%

\begin{abstract}

Hall viscosity characterizes the geometric response of a quantum Hall droplet to deformations of the underlying metric, yet it has remained difficult to measure directly. We propose a spectroscopic probe based on circular dichroism, using chiral metric-sensitive drives -- implemented as rotating quadrupolar (``saddle") perturbations -- that effectively modulate the metric and couple to the generators of area-preserving deformations. The resulting dichroic signal directly measures the Hall viscosity, while frequency-resolved spectroscopy disentangles it from other excitations. A local formulation further enables spatially resolved markers of Hall viscosity applicable to both continuum and lattice systems. Our results open a direct route to measuring Hall viscosity in quantum-engineered platforms such as cold atoms in optical lattices.
\end{abstract}

\maketitle

{\it Introduction---} The topological characterization of quantum Hall states has traditionally rested on the quantized Hall response and the related Chern number~\cite{PhysRevLett.45.494,PhysRevLett.49.405,PhysRevB.31.3372,PhysRevB.23.5632}, which successfully classifies integer and fractional quantum Hall states according to their global electromagnetic response. Yet distinct quantum Hall states can share the same Chern number while differing in their internal correlations~\cite{wen_zee,wen1995topological}, signaling the need for additional characterization. A key refinement is provided by the Hall viscosity~\cite{avron:seiler:zograf:1995,Avron1998}, which is directly related to the intrinsic orbital spin $s$ of the many-body state via $\eta_H = \hbar n_0 s/2$~\cite{wen2012modular,read2009non,read:rezayi:2011,fremling2014hall,cho2014geometry,arouca2022quantum,cappelli2018bulk}. This quantity distinguishes phases sharing the same Hall conductance — including the Moore--Read Pfaffian~\cite{moore1991nonabelions} and anti-Pfaffian states~\cite{levin2007particle,lee2007particle,Zhu_interface} — and carries direct geometric meaning: arising as the linear response to adiabatic metric deformations, it encodes Berry curvature in the space of geometric deformations~\cite{avron:seiler:zograf:1995,kimura2021hall} and provides access to the ``shape" of many-body correlations~\cite{Zaletel_2013,Tu_2013}, complementing topological invariants with geometric information.

Despite its fundamental importance, experimental access to Hall viscosity remains challenging. Signatures consistent with odd-viscous response have been reported in the hydrodynamic electron fluid of graphene~\cite{berdyugin2019measuring} and in chiral active fluids~\cite{banerjee2017odd,soni2018free}. Existing strategies rely on transport-based signatures~\cite{delacretaz2017transport,pellegrino2017nonlocal,HoyosSon} or hydrodynamic effects~\cite{scaffidi2017hydrodynamic,alekseev2016negative,holder2019unified,rao2020hall,Musser_2024}, while more recent dynamical approaches include conformal drives in $(1+1)\text{D}$ systems~\cite{lapierre2025nonequilibrium} and geometric quenches probing chiral graviton modes~\cite{haldane2009hall,Haldane_2011,golkar2016spectral,son2019chiral,liang2024evidence,xavier2025chiral,bacciconi2026chiralgravitonmodesnonabelian}.

In this work, we introduce a probe of Hall viscosity based on circular dichroism (CD), an approach well suited to extract quantum geometric properties such as Berry curvatures, quantum metrics~\cite{tran2017probing,ozawa:goldman:2018,Repellin_dichroism,unal_2025,bermond2025local} and magnetizations~\cite{bermond2025dichroism}. We show that suitably engineered circular drives — realized as rotating quadrupolar perturbations — effectively modulate the metric and couple to the generators of area-preserving deformations (APD). The polarization of these drives selects the chirality of the excitation, so that the differential absorption between the two drives directly measures the chiral response to metric deformations, i.e., the Hall viscosity. Frequency-resolved measurements further allow one to isolate the Hall-viscosity contribution from other responses, and the approach naturally admits a local formulation leading to spatially resolved markers applicable to lattice systems~\cite{Barkeshli_2012,Shapourian_2015,Tuegel_2015,rao2020hall}. Our scheme builds on Ref.~\cite{sisterEllipticpaper}, where dichroic responses were shown to encode the geometry of anisotropic quantum Hall droplets.

{\it Theoretical background---} 
We start by introducing the Hall viscosity framework. We consider a system of non-interacting fermions of mass $\meff$, moving on a plane under the effect of an orthogonal magnetic field, $B\hat z=\boldsymbol{\nabla}\times{\bf A}$;
the single-particle problem is notoriously described by the Landau Hamiltonian
\begin{equation}
	\label{eq:ll_limit}
	H = \frac{1}{2\meff} g^{ab} \pi_a  \pi_b,
\end{equation}
where $\pi_a=p_a-e A_a$ are the kinetic momenta, which obey $[\pi_x,\pi_y]=i \hbar e B$.
In general, a diagonal unimodular Landau-orbit (inverse) metric $g^{ab}$ different from the Euclidean one $\delta^{ab}$ can appear in this kinetic term~\cite{bo2012prb}; with a simple rotation, such a metric can be written as a diagonal matrix $(g^{ab})={\rm diag}(\beta,\beta^{-1
})$. The spectrum of Eq.~\eqref{eq:ll_limit} is easily found, yielding flat Landau levels $E_{n}=\hbar\frac{eB}{\meff}(n+\frac{1}{2})$, independent of the underlying geometry set by the metric.

We follow Park and Haldane~\cite{Park_2014} and define $\tR^a=\epsilon^{ab}\pi_b/eB$, in terms of which one introduces APD generators~\cite{Park_2014}
$\tL^{ab}= \frac{1}{4\ell_B^2}\{\tR^a,\tR^b\}$, 
which satisfy the commutation relations of a $\mathfrak{sl}(2;\mathbb R)$ algebra $[\tL^{ab},\tL^{cd}]=\frac{i}{2}(\epsilon^{ac}\tL^{bd}+\epsilon^{ad}\tL^{bc}+ a\leftrightarrow b)$.
The Landau-orbit metric can be deformed -- without changing its determinant -- by the action of the APD generators $U_\alpha=e^{i \alpha_{ab}\tL^{ab}}$. In particular, the Hamiltonian corresponding to any flat metric can be expressed in terms of the original one as $H_\alpha=U^\dagger_\alpha H U_\alpha$, parametrized by the variables $\alpha_{ab}$. In this framework, Hall viscosity is identified as the adiabatic response of the incompressible ground state $\ket{\Psi_0}$ to this deformation, obtained by promoting the parameters to slowly varying functions of time, $\alpha_{ab}\to\alpha_{ab}(t)$.
If we define the deformed state  $\ket{\Psi(\alpha)}=U_\alpha\ket{\Psi_0}$, the generalized force conjugate to the small deformation $\alpha_{ab}$ is given by the adiabatic theorem~\cite{avron:seiler:zograf:1995, Park_2014, De_Grandi_2010, Weinberg_2017}
\begin{equation}
	\label{eq:stress_tensor}
	F^{ab} =  -\Braket{\frac{\partial H_\alpha}{\partial \alpha_{ab}}} = -\left.\frac{\partial E}{\partial \alpha_{ab}}\right|_{\alpha=0} + {\mathcal F}^{abcd}\,\dot{\alpha}_{cd} \, ,
\end{equation}
where $\mathcal F$ is a Berry curvature in the deformation space 
\begin{equation}
	\begin{split}
		{\mathcal F}^{abcd} &= -2\hbar\, \Im \left.\Braket{\frac{\partial \Psi}{\partial\alpha_{ab}}|\frac{\partial \Psi}{\partial\alpha_{cd}}}\right|_{\alpha=0} \\&= i\hbar \braket{\Psi_0|[\tL^{ab},\tL^{cd}]|\Psi_0}.
	\end{split}
\end{equation}
One defines the Landau-orbit Hall-viscosity tensor $\widetilde\eta^{abcd} = -\frac{1}{A}{\mathcal F}^{abcd}$ by dividing by the area $A$ of the quantum Hall system, 
which in terms of the number $N_\Phi$ of flux quanta penetrating it reads
$A= N_\Phi 2\pi\ell_B^2$.
Using the APD commutation relations, the 4-tensor can be broken up into a symmetric 2-tensor $\widetilde\eta_{H}^{ab}$
\begin{align}
	\label{eq:viscosity_tensor}
	\widetilde\eta^{abcd} &= \frac{1}{2}(\epsilon^{ac}\widetilde\eta_H^{bd}+\epsilon^{ad}\widetilde\eta_H^{bc}+ a\leftrightarrow b)
	\\
	\widetilde\eta_H^{ab} &= \frac{\hbar}{2\pi \ell_B^2}\,\frac{1}{N_\Phi}\,\braket{\Psi_0|\tL^{ab}|\Psi_0}.
\end{align}

For the Hamiltonian in Eq.~\eqref{eq:ll_limit}, when the $N$-particle state $\ket{\Psi_0}$ occupies the lowest $\mathcal C$ Landau levels, this evaluates to
\begin{equation}
	\widetilde\eta_H^{ab} = \frac{\hbar}{8\pi \ell_B^2}\,\mathcal C^2g^{ab} \, ,
\end{equation}
which is the Hall viscosity first discussed by Avron, Seiler and Zograf~\cite{avron:seiler:zograf:1995,Avron1998}. In this work, we focus on the scalar
\begin{equation}
	\label{eq:contracted_viscosity}
	\overline\eta_H = \frac{1}{2}g_{ab}\widetilde\eta_H^{ab} = \frac{\hbar}{8\pi \ell_B^2} \mathcal C^2\, ,
\end{equation}
obtained by contracting the Hall-viscosity 2-tensor with the metric.

While the above discussion focused on Landau levels in the continuum, we will later explore the Hall viscosity and its dichroic measurements using a lattice description, considering the Harper-Hofstadter model relevant to a broad range of quantum-simulation platforms~\cite{aidelsburger2015measuring,leonard2023realization,ozawa:price:amo:goldman:etal:2019}. This choice is motivated not only by experimental practicability, but also by the opportunity to study lattice effects beyond the ideal Landau levels of Eq.~\eqref{eq:ll_limit}.

{\it The dichroic approach: viscosity through shaking---} 
We now show how the Landau-orbit Hall-viscosity  $\overline\eta_H $ can be accessed through a spectroscopic probe, i.e.~by “shaking” the ground-state $\ket{\Psi_0}$.

First, we use the fact that the Hall viscosity can be written as the expectation value of the Landau-orbit generator of rotations~\cite{Haldane_2011}, defined as $L\equiv g_{ab}\tL^{ab}=\frac{\ell_B^2}{4\hbar^2}g^{ab}\{\pi_a,\pi_b\}$: $\overline\eta_H = \frac{\hbar}{4\pi \ell_B^2 N_\Phi} \langle \Psi_0| L |\Psi_0\rangle$.
Then, we decompose the unimodular metric in terms of two vectors~\cite{Haldane_2011,Qiu_2012}, $g^{ab}=e_1^a e_1^b+e_2^ae_2^b$, with $e_1^a e_2^b \epsilon_{ab}=1$, from which we define “flat-space” momenta, $\Pi_i=e_i^a \pi_a$. 
With this decomposition, the rotation generator simplifies as
$L=\ell_B^2\frac{\Pi_1^2+\Pi_2^2}{2\hbar^2}$. 
	
In view of defining a proper metric-sensitive dichroic probe, we introduce two key operators
\begin{align}
		G_0 = \frac{\ell_B^2}{\hbar^2}(\Pi_1^2 - \Pi_2^2), \qquad G_1 = \frac{\ell_B^2}{\hbar^2}(\Pi_1\Pi_2+\Pi_2\Pi_1) \, ,\label{eq:G0G1}        
\end{align}
whose commutator produces a rotation, $[G_0,G_1]= 8 i L$. Crucially, the expectation value of this commutator then yields the Hall viscosity $\overline\eta_H$, forming the basis of our spectroscopic scheme.
\begin{figure}[t]
	\includegraphics[width=1\linewidth]{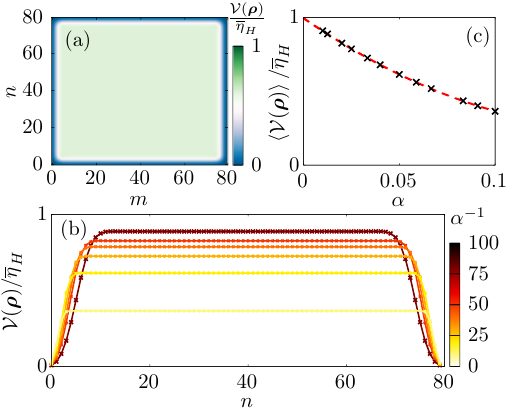}
	\caption{
		The Landau-orbit Hall viscosity, as extracted from the viscosity marker $\mathcal V(\br)$ defined in Eq.~\eqref{eq:viscosity_marker}, for the non-interacting fermionic Harper-Hofstadter model with $N_x=N_y=80$ lattice sites and $J_x=J_y=1$.
		(a) The viscosity marker for $N=280$ fermions filling up the lowest band at flux  $\alpha=1/20$;
		a slice of the marker at constant $m=N_x/2$ is compared to the theoretical value [Eq.~\eqref{eq:uniform_marker_value}] in panel (b), for various values of flux $\alpha$.
		(c) Value of the marker averaged over a $10\times10$ square located around $\br\!=\!(N_x/2,N_y/2)$ (black crosses), as a function of flux $\alpha$, showing how it extrapolates to the Landau-orbit viscosity as $\alpha\rightarrow0$. The points have been fitted with a quadratic function of $\alpha$ (red dashes); as $\alpha\rightarrow0$, we find $\braket{\mathcal V(\br)}/\overline\eta_H=0.995(1)$.
	}
	\label{fig:viscosity_marker}
\end{figure} 
Indeed, inspired by circular dichroic probes used to measure the Hall conductivity~\cite{tran:dauphi:grushin:zoller:goldman:2017,asteria2019measuring,repellin2019detecting,unal_2025,sisterEllipticpaper}, we now introduce the metric-sensitive dichroic probe
\begin{equation}
	\label{eq:bare_dichroic_probe}
	U_\pm= {\rm Re}\left[ u_0 e^{-i \omega t} (G_0\mp i G_1)\right],
\end{equation}
with the goal of spectroscopically accessing the Hall viscosity. Here, the operators $G_0\pm i G_1$ play a role analogous to the circularly-polarized electric fields in conventional circular dichroism~\cite{tran:dauphi:grushin:zoller:goldman:2017}.
The excitation rates resulting from the perturbation in Eq.~\eqref{eq:bare_dichroic_probe} can be computed using Fermi's golden rule,
\begin{equation}
	\label{eq:fgr_rates}
	\Gamma_\pm(\omega>0) = 2\pi \frac{|u_0|^2}{4\hbar^2}\sum_{j\neq0} |\braket{\Psi_j|G_0\mp i G_1|\Psi_0}|^2\delta(\omega_{j,0}-\omega).
\end{equation}
where $\ket{\Psi_j}$ denote all possible excited states and $\omega_{j,0}$ the corresponding energies. One finally obtains that the integrated differential rate
gives direct access to the Hall viscosity $\overline\eta_H$,
\begin{equation}
	\label{eq:quantized_rate_difference}
	\frac{\Delta\Gamma^{\rm int}}{A\,|u_0|^2} \equiv \frac{\int_0^\infty (\Gamma_+(\omega)-\Gamma_-(\omega))d\omega}{A\,|u_0|^2} = \frac{16\pi}{\hbar^3} \, \overline\eta_H.
\end{equation}
This is one of the key results of our approach. Before turning to its practical implications, we note that other polarizations are  conceivable for the probe in Eq.~\eqref{eq:bare_dichroic_probe}. In particular, a linear drive of the form 
$U\sim G_{0,1} \cos(\omega t)$
would probe the quantum metric defined in metric-deformation space~\cite{ozawa:goldman:2018,haldane2009hall}. This observation points toward a broader generalization of the SWM sum rule~\cite{souza2000prb,bermond2025dichroism} to viscosity-type transport coefficients.

{\it Implementation---}
We now turn to the explicit implementation of the metric-sensitive dichroic probe.
With the coordinate choice that makes $g^{ab}$ diagonal, one can choose $e_1=(\beta^{1/2},0)$ and $e_2=\left(0,\beta^{-1/2}\right)$, which yields
$G_0=\beta \pi_x^2 - \beta^{-1}\pi_y^2$ and $G_1=\pi_x\pi_y+\pi_y\pi_x$. The perturbation in Eq.~\eqref{eq:bare_dichroic_probe} can then be induced by modifying the metric of Eq.~\eqref{eq:ll_limit} in a time-dependent way; for instance, assuming $u_0$ to be real,
\begin{equation}
	\label{eq:metric_deformation_viscosity_probe}
	g^{-1}_\mp(t) = \begin{pmatrix}
		\beta(1+\delta \cos(\omega t)) & \mp \delta \sin(\omega t)\\
		\mp\delta \sin(\omega t) & \beta^{-1}\left(1-\delta \cos(\omega t)\right),
	\end{pmatrix}
\end{equation}
where $\delta = 2\meff u_0 \ell_B^2/\hbar^2$. Notice that $\det g_\mp = 1-\delta^2$, i.e.~the metric is only unimodular at linear order in $\delta$.

As already motivated, we now introduce the Harper-Hofstadter model~\cite{Harper_1955, Hofstadter_1976} as a concrete setup for exploring lattice implementations of our scheme. The corresponding Hamiltonian reads
\begin{align}
	\label{eq:hh_hamiltonian}
	H_{\rm HH}=&-J_x \sum_{m,n} c_{m+1,n}^\dagger c_{m,n}^{} -J_y \sum_{m,n} e^{i 2\pi \alpha m}c_{m,n+1}^\dagger c_{m,n}^{}\nonumber \\
	&+\hc,
\end{align}	
where $c_{m,n}^{(\dagger)}$ are the annihilation (creation) operators of a single fermion at the  lattice site $(m,n)$ of a square lattice, and $2\pi \alpha$ the magnetic flux per plaquette.
In the continuum limit where the magnetic length exceeds the lattice spacing, $\ell_B=a_0/\sqrt{2\pi\alpha} \gg a_0$, the model reduces to the Landau Hamiltonian of Eq.~\eqref{eq:ll_limit} with effective mass  $\meff^{-1}=2a_0^2\sqrt{J_xJ_y}/\hbar^2$ and  diagonal unimodular Landau-orbit metric
$g^{xx}=1/g^{yy}=\beta$, where $\beta=\sqrt{J_x/J_y}$ encodes the hopping anisotropy~\cite{supp_mat}. We show in Ref.~\cite{supp_mat} how the metric deformation of Eq.~\eqref{eq:metric_deformation_viscosity_probe} can be explicitly implemented via time-dependent modulation of the hopping amplitudes in an optical triangular lattice~\cite{supp_mat}.

{\it A local marker for the Hall viscosity---}
We now introduce a local marker for the Landau-orbit Hall viscosity, by writing the excitation rates in Eq.~\eqref{eq:fgr_rates} as a sum of local contributions. This follows the approach of Ref.~\cite{tran:dauphi:grushin:zoller:goldman:2017}, which linked circular-dichroic signals to the Bianco-Resta Chern marker~\cite{BiancoResta_2011}. Although experimentally accessing this marker would require monitoring excitation rates (or  absorbed power) with single‑site resolution -- a challenging task -- its theoretical formulation nevertheless provides fundamental insight into Hall viscosity and related geometric responses, as we now explain.

We write the differential integrated rates as a sum of local contributions, $\Delta\Gamma^{\rm int}=|u_0|^2\left(16\pi \hbar^{-3}\sum_\br \mathcal V(\br)\,a_0^2 \right)$, where $\br$ is a lattice-site index. From Eq.~\eqref{eq:fgr_rates}, one obtains
\begin{equation}
	\label{eq:viscosity_marker}
	\mathcal V(\br)=\frac{\hbar}{8a_0^2} \,\Im\bra{\br}P G_0 Q G_1 P \ket{\br} \, ,
\end{equation}
where $\ket{\br}$ is a well-localized state at the site $\br$, while $Q$ ($P$) is a projector onto the empty (occupied) states~\cite{supp_mat}.
Figure~\ref{fig:viscosity_marker}(a) shows that $\mathcal V(\br)$ is uniform in the bulk, where its value reflects the Landau-orbit viscosity through
\begin{equation}
	\label{eq:uniform_marker_value}
	\mathcal V(\br) = \overline\eta_H = \frac{\hbar\alpha}{4 a_0^2} \mathcal C^2.
\end{equation}
Interestingly, if one assumes the marker to be uniform over the entire system's area, we obtain $\sum_\br\mathcal V(\br)\,a_0^2=\overline\eta_H A$, hence recovering the result in Eq.~\eqref{eq:quantized_rate_difference}.

Figure~\ref{fig:viscosity_marker}~(b,c) shows that this marker converges to the expected Hall viscosity value [Eq.~\eqref{eq:contracted_viscosity}] in the continuum limit $\alpha \rightarrow0$, yet exhibits deviations at finite $\alpha$. We attribute these to the lattice discretization of the kinetic momenta entering Eq.~\eqref{eq:viscosity_marker}. An analogous effect arises in the Bianco-Resta Chern marker, $\mathfrak C(\br)=4\pi \Im\bra{\br}PxQyP\ket{\br}$:~as we discuss in Ref.~\cite{supp_mat}, replacing the coordinates $r^a$ in its definition with kinetic momenta produces identical finite-$\alpha$ deviations, confirming that discretization of the momenta is the common source of error. This motivates reformulating the Hall-viscosity marker of Eq.~\eqref{eq:viscosity_marker} using coordinates $r^a$ alone. As we show below, this yields markedly more robust results at finite $\alpha$, and points towards a more accessible experimental probe of Hall viscosity.

\begin{figure}[t]
	\includegraphics[width=1\linewidth]{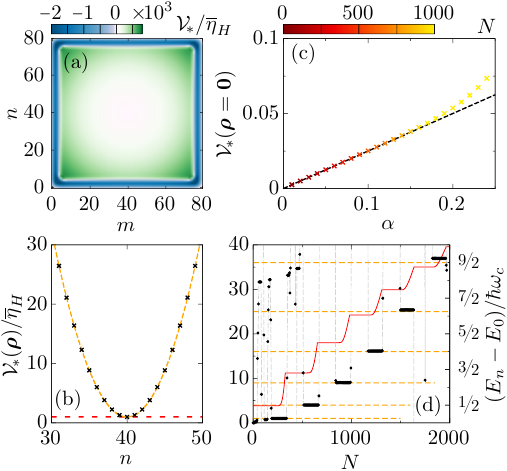}
	\caption{
		The Landau-orbit Hall viscosity, as extracted from the viscosity marker $\mathcal V_{\!*}(\br)$ defined in Eq.~\eqref{eq:viscous_marker_real_space}, for the Harper-Hofstadter model with $N_x=N_y=80$ lattice sites and $J_x/J_y=1$; unless specified, $\alpha\!=\!1/20$.
		(a) The viscosity marker $\mathcal V_{\!*}(\br)$ for $N=280$ fermions filling up the lowest band;
		a slice of the marker in (a), at constant $m=N_x/2$, is compared to the theoretical prediction [Eq.~\eqref{eq:viscous_marker_real_space}] (orange dashes) in (b); the constant offset (red dashes) matches the Hall viscosity [Eq.~\eqref{eq:contracted_viscosity}].
		(c) Value of the marker at $\br=(N_x/2,N_y/2)\equiv\bf0$ (crosses), compared to the Landau-orbit viscosity [Eq.~\eqref{eq:contracted_viscosity}] (black dashes) as a function of the flux $\alpha$.
		(d) Value of the marker at $\br=\bf0$ (black circles joined by gray dashes), compared to the Landau-orbit viscosity [Eq.~\eqref{eq:contracted_viscosity}] (orange dashes) for various values of $\mathcal C$ (number of filled bands), 
		as a function of the number of fermions $N$. The single-particle energies (red line) are shown on the right $y$-axis; these have been shifted by $E_0=-2(J_x+J_y)$~\cite{supp_mat}, and scaled by the cyclotron energy $\hbar\omega_c=eB/\meff$.
	}
	\label{fig:viscosity_real_space_marker}
\end{figure} 

{\it A different approach: the rotating saddle---}
The form of the probe in  Eq.~\eqref{eq:bare_dichroic_probe} leads us to consider a conceptually simpler rotating quadrupolar potential
\begin{equation}
	\label{eq:rotating_saddle}
	V_\pm= {\rm Re}\left[ v_0 e^{-i \omega t} ({\mathcal Q}_0\mp i {\mathcal Q}_1)\right],
\end{equation}
where $\ell_B^2{\mathcal Q}_0 = (X^1)^2 - (X^2)^2$, $\ell_B^2{\mathcal Q}_1 = 2X^1X^2$ have the same structure of Eq.~\eqref{eq:G0G1} but in real space:~$X^i=E^i_a r^a$ are defined in terms of particle's coordinates $r_a$, and $E_a^i = \epsilon_{ab} \epsilon^{ij} e^b_j$.
With the coordinate choice that makes $g^{ab}$ diagonal, $\ell_B^2{\mathcal Q}_0=\beta^{-1}x^2-\beta y^2$ and $\ell_B^2{\mathcal Q}_1=2xy$ -- the perturbation defined by Eq.~\eqref{eq:rotating_saddle} is a rotating saddle.

For a finite system with edges, the  integrated differential rate associated with the probe in Eq.~\eqref{eq:rotating_saddle} must vanish~\cite{stringari_pitaevskii}, yet spatial or frequency selection gives direct access to the Landau-orbit Hall viscosity; this is analogous to the detection of the Hall conductance via circular-dichroic rates under open boundaries~\cite{unal_2025}. This can be understood heuristically from the fact that the real-space perturbation in Eq.~\eqref{eq:rotating_saddle} ``contains'' the metric-shaking one [Eq.~\eqref{eq:bare_dichroic_probe}].
To see this, one introduces guiding center coordinates $R^i=r^i+\frac{\ell_B^2}{\hbar}\epsilon^{ij}\pi_j$, which commute with the kinetic momenta and satisfy $[R^x,R^y]=-i\ell_B^2$; 
the coordinates then split into guiding-center and kinetic momenta
$X^i = \mR^i - \frac{\ell_B^2}{\hbar}\epsilon^{ij}\Pi_j$, where ${\mR}^i=E^i_a R^a$. As a consequence,
${\mathcal Q}_i = - G_i $ (+ other operators), and thus $U_\pm$ in Eq.~\eqref{eq:bare_dichroic_probe} can be interpreted as a component of $V_\pm$ in Eq.~\eqref{eq:rotating_saddle}.
The key advantage of the real-space probe Eq.~\eqref{eq:rotating_saddle} over the metric-shaking protocol Eq.~\eqref{eq:bare_dichroic_probe} lies in the fact that $V_{\pm}$ does not directly involve kinetic momenta; as such, its definition on a finite lattice is straightforward.

Using both general semiclassical arguments -- based on energy conservation and Kramers--Kronig relations -- as well as an explicit Landau-level calculation, we show in Ref.~\cite{supp_mat} that the local marker $\mathcal V_{\!*}(\br)$, defined via the integrated differential rate $\Delta\Gamma^{\rm int}_*=|v_0|^2\left(\frac{16 \pi}{\hbar^3}\sum_{\br}\mathcal V_{\!*}(\br) a_0^2\right)$ induced by the rotating-saddle potential $V_\pm$, takes the general form
\begin{equation}
	\label{eq:viscous_marker_real_space}
	\begin{split}
		\mathcal V_{\!*}(\br) &= \frac{\hbar}{8a_0^2}\Im\braket{\br|P\mathcal Q_0 Q \mathcal Q_1P|\br} \\
		&=\overline\eta_H+\frac{\hbar}{8\pi\ell_B^4}\mathcal C\,\rho^a g_{ab}\, \rho^b.
	\end{split}
\end{equation}
This is the second main result of our analysis. The parabolic term originates from the non-uniform Hall current induced in the bulk by the spatially varying electric field associated with the probe Eq.~\eqref{eq:rotating_saddle}, while the first term arises from the viscous stress in the bulk as described by Eq.~\eqref{eq:stress_tensor}. Notably, the local contribution to the transition rates is non-zero even where the electric field of the probe, $\bf E_\pm \propto -\boldsymbol{\nabla}V_\pm$, vanishes: 
\begin{equation}
	\label{eq:marker_at_origin}
	\mathcal V_{\!*}(\br=\boldsymbol{0}) = \overline\eta_H.
\end{equation}
The reason can be traced to the fact that the non-vanishing gradient of $\bf E_\pm$ generates a non-vanishing strain, which in turn produces a Hall viscous force.

Figure~\ref{fig:viscosity_real_space_marker} demonstrates that the real-space marker Eq.~\eqref{eq:viscous_marker_real_space} provides a robust probe of the Landau-orbit Hall viscosity, even away from the continuum limit. Panel (a) shows the marker is non-uniform in the bulk, reflecting spatially varying excitation rates. Panel (b) confirms that a slice of the marker has a parabolic profile with a non-zero offset, in agreement with Eq.~\eqref{eq:viscous_marker_real_space}. Panel (c) shows the marker at the center,
$\mathcal V_{\!*}(\br=\boldsymbol{0})$, as a function of flux per plaquette $\alpha$; it faithfully recovers the Landau-orbit viscosity via Eq.~\eqref{eq:marker_at_origin} up to $\alpha\simeq 1/5$, beyond which deviations accumulate.
Finally, panel (d) shows $\mathcal V_{\!*}(\br=\boldsymbol{0})$ as a function of the number of fermions $N$: for a partially filled band the marker fluctuates erratically, whereas at integer filling it correctly recovers the Landau-orbit viscosity, with only small corrections emerging at large $\mathcal C$.

{\it Frequency resolved measurement---}
We briefly comment on the possibility of isolating the viscous contribution ($\overline\eta_H$) from the quadratic Hall-type term in Eq.~\eqref{eq:viscous_marker_real_space} in the low-flux (Landau-level) limit. This can be achieved by frequency selection: the viscous contribution arises from transitions induced by $G_{0}\pm i G_1$, operators that are quadratic in the kinetic momenta and therefore  couple Landau levels differing by two cyclotron quanta, producing excitations at energies $\sim 2\hbar\omega_c$.
The Hall-type contribution, by contrast, arises from operators linear in the kinetic momenta, which  connect levels separated by a single cyclotron quantum, corresponding to excitation energies $\sim \hbar\omega_c$.

In the low-flux limit, lattice corrections weakly mix Landau levels while preserving their parity. Transitions split into two frequency sectors: the Hall-type response connects states of opposite parity, involving transitions whose energies differ by an odd multiple of $\hbar\omega_c$, whereas the viscous response connects states of the same parity, corresponding to energy differences equal to an even multiple of $\hbar\omega_c$~\cite{supp_mat}. The two responses thus appear as distinct spectral features. By suitably modifying the projector $Q$ in Eq.~\eqref{eq:viscous_marker_real_space}, one can selectively retain transitions within a chosen frequency window -- physically corresponding to filtering excitation rates by frequency.


We consider in Fig.~\ref{fig:frequency_selection} 
the case in which only the lowest band is occupied, and demonstrate how monitoring the excitation rates as a function of probe frequency isolates the constant viscous contribution [panel (a)] from the parabolic Hall contribution [panel (b)], providing a practical route to measuring the two independently.

\begin{figure}[t]
	\includegraphics[width=1\linewidth]{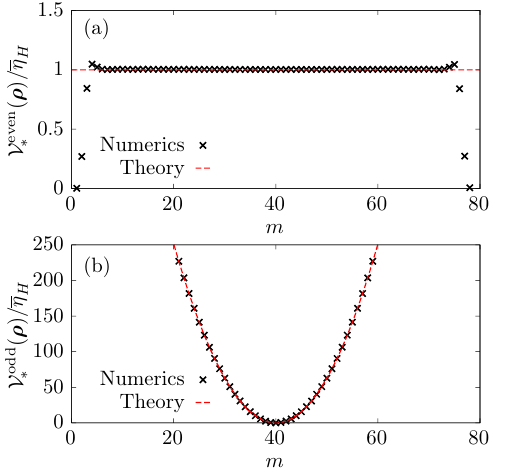}
	\caption{
		The viscosity-type (a) and Hall-type (b) contributions to the marker $\mathcal V_{\!*}(\br)$ in Eq.~\eqref{eq:viscous_marker_real_space} were isolated by frequency selecting the  even/odd transitions, respectively (see text).
		A slice of the obtained ``reduced" markers at $n\!=\!40$ is shown in both cases, and compared with the theoretical expectation [Eq.~\eqref{eq:viscous_marker_real_space}].
		Here, $N=600$ fermions occupy the lowest Hofstadter band, for a $80\times80$ lattice with  $J_x/J_y=1$, at flux $\alpha=1/10$.
	}
	\label{fig:frequency_selection}
\end{figure} 

{\it Conclusions---} We have introduced a spectroscopic framework to access Hall viscosity based on metric-sensitive circular dichroic probes. By engineering chiral perturbations that effectively ``shake" the underlying geometry -- either through direct modulation of the Landau-orbit metric or via a rotating saddle potential -- we showed that the differential absorption of opposite chiral drives provides direct access to the Berry curvature associated with area-preserving deformations, and hence to the Hall viscosity. This approach naturally leads to a local formulation, enabling spatially resolved markers that remain robust in lattice systems and beyond the strict continuum limit.

Our proposal is particularly well suited to cold-atom and quantum simulation platforms, where synthetic gauge fields and time-dependent control of lattice parameters can be used to implement both metric modulations and rotating quadrupolar potentials. In continuum-like setups~\cite{fletcher2021geometric,crepel2023geometric,mukherjee2022crystallization,schine2016synthetic,lunt2024realization}, these drives can be realized through controlled deformations of trapping potentials, while in lattice systems, they can be engineered via time-dependent hopping anisotropies or site-resolved modulations, as discussed for the Harper–Hofstadter model~\cite{aidelsburger2015measuring,leonard2023realization,impertro2024local,impertro2025strongly}. The spectroscopic nature of the protocol and its compatibility with frequency-resolved measurements make it readily accessible to current experimental techniques.

Extending this framework to interacting systems — in particular to fractional quantum Hall regimes — would provide a direct route to probing intrinsic orbital spin and distinguishing between competing topological orders, including non-Abelian phases~\cite{read:rezayi:2011,Zhu_interface}. It is interesting to contrast our single-particle probe with recent proposals accessing the geometric response through collective spin-2 ``graviton'' modes, which rely on more sophisticated modulations of interaction terms~\cite{xavier2025chiral,bacciconi2026chiralgravitonmodesnonabelian}; establishing connections between these approaches could offer complementary insights into the quantum geometry of strongly correlated states. Indeed, while the rotating-saddle probe in Eq.~\eqref{eq:rotating_saddle} gives access to the Landau-orbit Hall viscosity in non-interacting Chern-insulators [Eq.~\eqref{eq:marker_at_origin}], it also couples to the guiding-center degrees of freedom, opening a route to the full Hall viscosity of correlated insulators such as fractional quantum Hall states. More broadly, our results suggest that dichroic probes provide a versatile toolkit to explore geometric response functions in topological matter~\cite{sisterEllipticpaper}. Finally, it would be natural to extend these ideas to explore the Hall viscosity of higher-dimensional systems~\cite{Price_4D,Karabali_4D,Blagoje_4D,Karabali_2023,FQH_higher_dimension}, offering new opportunities to probe the interplay between topology, geometry, and dynamics in quantum systems.

{\it Acknowledgments---} 
We acknowledge enlightening discussions with Giandomenico Palumbo, Duncan Haldane, Cecile Repellin, Leonardo Mazza, Blagoje Oblak,  Iacopo Carusotto, Nehal Mittal, Zeno Bacciconi, Jie Wang, Laurens Vanderstraeten and Carolina Paiva. This research was financially supported by the ERC Grant LATIS, the FRS-FNRS (Belgium), the EOS project CHEQS, the Fondation ULB and the ANR PEPR
Grant QUTISYM ANR-23-PETQ-0002. B.~M. acknowledges support from the Security and Quantum Information Group (SQIG) in Instituto de Telecomunica\c{c}\~{o}es, Lisbon. This work is funded by FCT/MECI through national funds and when applicable co-funded EU funds under UID/50008: Instituto de Telecomunicações (IT). B.~M. further acknowledges the Scientific Employment Stimulus --- Individual Call (CEEC Individual) --- 2022.05522.CEECIND/CP1716/CT0001, with DOI: \href{https://doi.org/10.54499/2022.05522.CEECIND/CP1716/CT0001}{10.54499/2022.05522.CEECIND/CP1716/CT0001}. T.~O. acknowledges support from JSPS KAKENHI Grant Number JP24K00548, JST PRESTO Grant No. JPMJPR2353, and JST PRESTO Convergence Research Grant No. JPMJCR26XA.

\bibliography{bib.bib}

\clearpage
\cleardoublepage
\begin{supplementalMaterials}
	
	\title{Supplemental Material for\\ ``Hall viscosity from metric-sensitive dichroic probes''}
	
	\maketitle
	{
		\hypersetup{linkcolor=black}
		\tableofcontents
	}
	
	\section{Magnetic translations}
	We here introduce magnetic translation operators; we employ them to write an equivalent formulation of the Harper-Hofstadter Hamiltonian on a generic lattice, which easily allows to take its continuum limit.
	
	Using the kinetic momenta $\pi_i=p_i-eA_i$, which obey $[\pi_a,\pi_b]=i \hbar e B \epsilon_{ab}$,
	we can introduce magnetic translation operators
	\begin{equation}
		\label{eq:magnetic_translation}
		T_{(a_x,a_y)}=\exp(i (a_x \pi_x+a_y \pi_y)/\hbar);
	\end{equation}
	As a consequence of the Baker-Campbell-Hausdorff relation it follows that
	\begin{equation}
		\label{eq:magnetic_translation_commutator}
		T_{\mathbf d_1}T_{\mathbf{d}_2} =  T_{\mathbf d_1+\mathbf d_2} \exp\left(-i\,\frac{(\mathbf d_1 \times \mathbf d_2)\cdot\hat{z}}{2\ell_B^2}\right).
	\end{equation}
	
	\section{The Harper-Hofstadter model on a square lattice and its continuum limit}
	Consider the eigenstates $\ket{\psi_i}$ of the tight-binding Harper-Hofstadter Hamiltonian $H_{\rm HH}$ on a rectangular lattice with lattice spacings $a_x$ and $a_y$; we decompose them over the basis given by the lattice sites $\br=(m a_x,n a_y)$ as
	$\ket{\psi_i}=\sum_\br C_{\br,i}\ket{\br}$
	\begin{equation}
		\begin{split}
			E_i \ket{\psi_i}&= H_{\rm HH} \ket{\psi_i} \\&=
			\sum_\br C_{\br,i} \Biggl(-J_x \Bigl(\ket{\br+a_x\hat x}+\ket{\br-a_x\hat x}\Bigr)\\&\hspace{.5cm}-J_y\Bigl(e^{i2\pi\alpha m}\ket{\br+a_y \hat y}+e^{-i2\pi\alpha m}\ket{\br-a_y\hat y}\Bigr)\Biggr),
		\end{split}
	\end{equation}
	where $2\pi\alpha=a_x a_y/\ell_B^2$ is the magnetic flux per plaquette.
	
	The previous eigenvalue equation can be equivalently written as
	\begin{equation}
		\begin{split}
			E_i C_{\br,i} =& -J_x \Bigl(C_{\br-a_x\hat x,i}+C_{\br+a_x\hat x,i}\Bigr)\\&-J_y\Bigl(e^{i2\pi\alpha m}C_{\br-a_y \hat y,i}+e^{-i2\pi\alpha m}C_{\br+a_y \hat y,i}\Bigr).
		\end{split}
	\end{equation}
	If we promote the (discrete) probability amplitudes $C_{\br,i}$ to a family of continuous functions $C_{i}(\br)$, we can rewrite the right-hand side of the previous equation using the magnetic-translation operators
	\begin{subequations}
		\begin{equation}
			\label{eq:hh_magnetic_translations}
			H=-J_x\left(T_{(a_x,0)}^{}+T_{(a_x,0)}^\dagger\right)-J_y\left(T_{(0,a_y)}^{}+T_{(0,a_y)}^\dagger\right);
		\end{equation}   
		this form of the Hamiltonian, whose associated Schr\"odinger's equation reads
		\begin{equation}
			E_i C_{i}(\br)=H C_i(\br),
		\end{equation}
	\end{subequations}
	is very useful to analyze the low-energy continuum limit of the model, under the assumptions that the amplitudes $C_{\br,i}$ vary smoothly over the lattice spacing $a_{x(y)}$.
	Notice how, in order for the normalization to be correct when taking the continuum limit, one needs
	\begin{equation}
		\label{eq:normalization}
		C_{\br,i} = \sqrt{a_xa_y}\, C_i(\br):
	\end{equation}
	this way $1=\sum_{\br}|C_{\br,i}|^2=\sum_{\br}|C_i(\br)|^2 a_x a_y \rightarrow \int d^2\br|C_{\br,i} |^2$.

	Before we do so, we stress how, if we define a rectangular lattice through
	\begin{equation}
		\label{eq:hh_gauge_choice}
		\ket{m,n}=T_{(a_x,0)}^m T_{(0,a_y)}^n\ket{0,0}
	\end{equation} 
	where $x= a_x m$, $y=a_y n$ are the lattice points, we recover the same Hamiltonian after projecting Eq.~\eqref{eq:hh_magnetic_translations} onto these lattice sites.
	\begin{subequations}
		Indeed, acting with $T_{(a_x,0)}$ onto $\ket{m,n}$ gives
		\begin{equation}
			\begin{split}
				T_{(a_x,0)}\ket{m,n}=&\ket{m+1,n}    
			\end{split}
		\end{equation}
		while, using Eq.~\eqref{eq:magnetic_translation_commutator}, the action of $T_{(0,a_y)}$ onto $\ket{m,n}$ yields
		\begin{equation}
			\begin{split}
				T_{(0,a_y)}\ket{m,n}&=e^{i2\pi\alpha m}\ket{m,n+1}.
			\end{split}
		\end{equation}	
	\end{subequations}
	Notice how these identities directly lead to the Harper-Hofstadter Hamiltonian Eq.~\eqref{eq:hh_hamiltonian}
	\begin{equation}
		\label{eq:hh_hamiltonian_sm}
		\begin{split}
			H_{\rm HH}&=\sum_{\mathbf{r},\mathbf{r}'}\braket{\mathbf{r}'|H|\mathbf{r}}  c_{\mathbf{r}'}^\dagger c_{\mathbf{r}}=\\&-J_x \sum_{m,n} c_{m+1,n}^\dagger c_{m,n} -J_y \sum_{m,n} e^{i2\pi\alpha m}c_{m,n+1}^\dagger c_{m,n}+\hc
		\end{split}
	\end{equation}	
	Notice finally how Eq.~\eqref{eq:hh_gauge_choice} is a gauge choice: it reflects onto the Landau gauge of Eq.~\eqref{eq:hh_hamiltonian_sm}.
	
	It is now possible to analyze the continuum limit of the Harper-Hofstadter model;
	indeed, when the lattice spacing is much smaller than the magnetic length $a_{x(y)}\ll \ell_B$ one expects, at low-energies, the amplitudes $C_{\br,i}$ to vary smoothly over the lattice spacing $a_{x(y)}$; 
	we then perform an ordered expansion of the magnetic-translation operators appearing in Eq.~\eqref{eq:hh_magnetic_translations},
	retaining only the lowest order terms.

	When $a_{x(y)}/\ell_B\ll 1$ we get
	\begin{equation}
		\label{eq:hh_limit_sm}
		H \simeq -2(J_x+J_y) + \frac{1}{2\hbar^2}\Bigl(2J_x a_x^2\pi_x^2 + 2J_y a_y^2 \pi_y^2\Bigr)
	\end{equation}
	up to $\mathcal O(a_{x(y)}^4/\ell_B^4)$ corrections.
	This is exactly (up to the constant energy shift $-2(J_x+J_y)$) the result Eq.~\eqref{eq:ll_limit} quoted in the main text upon focusing on the square-lattice case $a_x=a_y\equiv a_0$.
	
	\section{The Harper-Hofstadter model on a triangular lattice and its continuum limit}
	In this section we focus instead on a triangular lattice with primitive vectors
	$\mathbf{d}_1= (a_0,0)$ and $\mathbf{d}_2=a_0(\cos(\theta),\sin(\theta))$,
	pierced by a uniform magnetic field.
	We will focus mostly on the equilateral case, with $\theta=\pi/3$.
	
	Following the previous section, we can write the Harper-Hofstadter Hamiltonian on such a lattice using magnetic-translation operators as
	\begin{equation}
		\label{eq:triangular_hh}
		\begin{split}
			H = -&t_1 (T_{\mathbf{d}_1}^{}+T_{\mathbf{d}_1}^{\dagger}) -t_2 (T_{\mathbf{d}_2}^{}+T_{\mathbf{d}_2}^{\dagger}) -\\ - &t_3 (T_{\mathbf{d}_2-\mathbf{d}_1}^{}+T_{\mathbf{d}_2-\mathbf{d}_1}^\dagger);
		\end{split}
	\end{equation}
	this form is again useful since it allows to study the low-energy continuum limit of the lattice model, which we are now going to derive by projecting Eq.~\eqref{eq:triangular_hh} onto the lattice sites
	\begin{equation}
		\label{eq:triangular_lattice_gauge}
		\ket{m,n}=T_{\mathbf{d}_1}^m T_{\mathbf{d}_2}^n\ket{0,0}.
	\end{equation}	
	
	Using Eq.~\eqref{eq:magnetic_translation_commutator} we get $T_{\mathbf d_2-\mathbf d_1} = T_{\mathbf d_2} T_{-\mathbf d_1} e^{i 2\pi \alpha}$
	where the flux per plaquette in the triangular case is
	\begin{equation}
		2\pi\alpha= \frac{1}{2} a_0^2 \sin(\theta) B.
	\end{equation}
	Notice how this relation is indeed consistent with the Aharonov-Bohm condition $T_{-\mathbf{d}_2}T_{\mathbf{d}_2-\mathbf d_1}T_{\mathbf{d}_1}= e^{i2\pi\alpha}$.
	A second useful identity which can be derived using Eq.~\eqref{eq:magnetic_translation_commutator} is $T_{\mathbf d_2}T_{\mathbf d_1} = T_{\mathbf d_1}T_{\mathbf d_2} e^{i 4\pi\alpha}$, which is consistent with the Aharonov-Bohm flux picked up when traversing two triangles.
	
	We can finally understand the action of the relevant magnetic translations on the lattice states defined in Eq.~\eqref{eq:triangular_lattice_gauge}.
	These produce the following hopping rules
	\begin{subequations}
		\begin{align}
			T_{\mathbf{d}_1} \ket{m,n} &= \ket{m+1,n}\\
			T_{\mathbf{d}_2} \ket{m,n} &= e^{i 4\pi m\alpha} \ket{m,n+1}\\
			T_{\mathbf{d}_2-\mathbf d_1} \ket{m,n} &= 
			e^{i 2\pi(2m-1)\alpha} \ket{m-1,n+1}.
		\end{align}
	\end{subequations}
	As a consequence, the continuum Hamiltonian Eq.~\eqref{eq:triangular_hh}, projected onto the lattice sites Eq.~\eqref{eq:triangular_lattice_gauge}, reads
	\begin{equation}
		\begin{split}
			H = - \sum_{m,n} t_1& c_{m+1,n}^\dagger c_{m,n} + t_2 e^{i 4\pi m \alpha} c_{m,n+1}^\dagger c_{m,n} +\\+&\, t_3 e^{i2\pi (2m-1)\alpha} c_{m-1,n+1}^\dagger c_{m,n} + \hc     
		\end{split}
	\end{equation}
	which was previously studied for example in~\cite{Kazusumi_2006}.
	
	Analogously to what was done in the previous section, we here take the continuum limit of Eq.~\eqref{eq:triangular_hh} and study which metric emerges.
	Neglecting the constant energy offset we get, at quadratic order,
	\begin{equation}
		H \simeq \frac{t_1}{\hbar^2} (d_1^\alpha\pi_\alpha)^2 + \frac{t_2}{\hbar^2} (d_2^\alpha \pi_\alpha)^2 + \frac{t_3}{\hbar^2} (d_2^\alpha \pi_\alpha + d_1^\alpha\pi_\alpha)^2;
	\end{equation}	
	simplifying everything leads to
	\begin{equation}
		H \simeq \frac{1}{2\meff}\left(g^{xx} \pi_x^2 + g^{xy}(\pi_x \pi_y+\pi_y \pi_x) + g^{yy}\pi_y^2\right)
	\end{equation}	
	where $\meff^{-1}=2a_0^2\sqrt{(t_1t_2+t_1t_3+t_2t_3)\sin^2(\theta)}/\hbar^2$ and
	\begin{subequations}
		\begin{align}
			\frac{\hbar^2/a_0^2}{2\meff} g^{xx} &= t_1+t_3-2t_3 \cos(\theta) +(t_2+t_3)\cos(\theta)^2\\
			\frac{\hbar^2/a_0^2}{2\meff}g^{xy} &=( (t_2+t_3)\cos(\theta)-t_3)\sin(\theta)	\\
			\frac{\hbar^2/a_0^2}{2\meff}g^{yy} &= (t_2+t_3)\sin(\theta)^2.
		\end{align} 
	\end{subequations}
	When $\theta=\frac \pi 3$ the lattice is an equilateral triangular; if, in this limit, we also have the same hopping amplitudes, then $t_1=t_2=t_3$ and the inverse mass simplifies to $\meff^{-1}=3a_0^2t/\hbar^2$.
	As a consequence, $g^{xx}=g^{yy}=1$ and $g^{xy}=0$: the metric is Euclidean.
	Off-diagonal elements can be engineered by introducing small asymmetries in the hopping rates ($t_1$, $t_2$, $t_3$) and/or in the lattice geometry (regulated by $\theta$): this can be used to engineer the metric deformation Eq.~\eqref{eq:metric_deformation_viscosity_probe}. This is what we are going to discuss in the next section.
	
	\section{Physical implementation of the metric shaking protocol on a triangular lattice}
	We here consider the simplest case of an optical lattice with $\theta=\frac{\pi}{3}$; 
	such a lattice can be created by superposing three laser beams with the same frequency $\omega_L$, but suitably chosen wavevectors and field amplitudes $E_i$~\cite{Becker_2010,Yamamoto_2020}. 
	In this case, the hopping amplitudes can be independently tuned:
	one has $t_1\propto E_2E_3$, $t_2\propto E_1E_3$ and $t_3\propto E_1E_2$.
	We will focus on the case where $t_1\simeq t_2\simeq t_3$, so that the unperturbed metric is the Euclidean one.
	
	To achieve the metric perturbation Eq.~\eqref{eq:metric_deformation_viscosity_probe}, the hopping amplitudes can be independently tuned by modifying the field amplitudes in a time-dependent manner on a time-scale much longer than $\omega_L^{-1}$. 
	If we expand $E_i \simeq E+\delta E_i(t)$, we can write
	\begin{subequations}
		\begin{align}
			t_1'&\simeq t_0 + w(\delta E_2+\delta E_3)\\
			t_2'&\simeq t_0 + w(\delta E_1+\delta E_3)\\
			t_3'&\simeq t_0 + w(\delta E_1+\delta E_2),
		\end{align}
	\end{subequations}
	with $w$ the appropriate conversion factor.
	The couplings need to be engineered in such a way that the effective mass $\meff^{-2}\propto E_1 E_2 E_3 (E_1+E_2+E_3)$ remains constant to linear order, up to quadratic $\mathcal{O}(\delta E_i\delta E_j)$ terms: $\meff^{-1}\simeq 3a_0^2t_0/\hbar^2$.
	This can be seen to be the case if $\sum_i \delta E_i = 0$.
	
	Using this constraint, the metric deformation on top of the unperturbed one $g^{ab}=\delta^{ab}$ can be written in components as
	\begin{subequations}
		\begin{align}
			\delta g^{xx} &= -\delta g^{yy} = -\frac{w}{2t_0}\delta E_1 \\
			\delta g^{xy} &=  -\frac{w}{2t_0\sqrt{3}}(\delta E_1+2\delta E_2)
		\end{align}
	\end{subequations}
	which can immediately be seen to preserve the unimodularity condition at linear order.
	Choosing
	\begin{subequations}
		\begin{align}
			\delta E_1 &= -\frac{2t_0}{w}\delta \cos(\omega t)\\
			\delta E_2 &= +\frac{2t_0}{w} \delta \cos\left(\omega t \mp \frac{\pi}{3}\right),
		\end{align}
		where we recall that $\delta = 2\meff u_0\ell_B^2/\hbar^2$, we achieve the desired metric perturbation Eq.~\eqref{eq:metric_deformation_viscosity_probe} on top of a Euclidean metric.
		The last field amplitude,  $\delta E_3$, is determined by the unimodularity constraint $\delta E_3=-(\delta E_1+\delta E_2)$ as discussed above. It simplifies to
		\begin{equation}
			\delta E_3 = +\frac{2t_0}{w}\delta \cos\left(\omega t\pm \frac\pi 3\right).
		\end{equation}
	\end{subequations}

	\section{A local marker for the viscosity}
	We here show how the local marker expression Eq.~\eqref{eq:viscosity_marker} can be derived from the integrated rates Eq.~\eqref{eq:fgr_rates}.
	
	Notice how $G_0$, $G_1$ are one-body operators. As such, on a non-intercting fermionic system, we can express the summation over all the transition matrix elements,
	$\sum_{n\neq0} |\braket{\Psi_n|G_0\mp G_1|\Psi_0}|^2$, 
	as a summation over the occupied and empty states.
	The frequency-integrated transition rates become
	\begin{equation}
		\label{eq:free_fermions_integrated_rates}
		\Gamma_\pm^{\rm int}=2\pi \frac{|u_0|^2}{4\hbar^2}\sum_{\alpha \in {\rm occ.}}  \sum_{\beta \in {\rm emp.}} |\braket{\beta|G_0\mp i G_1|\alpha}|^2.
	\end{equation}
	We introduce projectors over the occupied and empty states
	\begin{subequations}
		\begin{align}
			Q &= \sum_{\beta\in {\rm emp.}} \ket{\beta}\bra{\beta}
			\\
			P = 1-Q &=  \sum_{\alpha\in {\rm occ.}} \ket{\alpha}\bra{\alpha}
		\end{align}    
	\end{subequations}
	which allows us to rewrite Eq.~\eqref{eq:free_fermions_integrated_rates} as a trace
	\begin{equation}
		\begin{split}
			\Gamma_\pm^{\rm int}=&2\pi \frac{|u_0|^2}{4\hbar^2}\sum_{\alpha} \braket{\alpha|P(G_0\pm i G_1)Q(G_0\mp i G_1)P|\alpha} \\=& 2\pi \frac{|u_0|^2}{4\hbar^2} {\rm Tr}\Bigl[P(G_0\pm i G_1)Q(G_0\mp i G_1)P\Bigr].
		\end{split}
	\end{equation}
	Taking the difference $\Delta\Gamma^{\rm int}=\Gamma_+^{\rm int}-\Gamma_-^{\rm int}$ we can express the dichroic signal as
	\begin{equation}
		\begin{split}
			\Delta\Gamma^{\rm int}=2\pi |u_0|^2 \Im \Bigl[{\rm Tr}\bigl( PG_0QG_1P  \bigr)\Bigr].
		\end{split}
	\end{equation}
	We use use the lattice vectors $\ket{m,n} \equiv \ket{\boldsymbol{\rho}}$ to perform the trace:
	\begin{equation}
		\begin{split}
			\Delta\Gamma^{\rm int}=2\pi \frac{|u_0|^2}{\hbar^2} \sum_\br\Im \Braket{\br| PG_0QG_1P  |\br}
		\end{split}
	\end{equation}
	which straightforwardly leads to the viscosity marker introduced in Eq.~\eqref{eq:viscosity_marker} when identifying 
	\begin{subequations}
		\begin{align}
			\Delta\Gamma^{\rm int} &= 16\pi \hbar^{-3} |u_0|^2 \sum_\br \mathcal V(\br) a_0^2
			\\
			\mathcal V(\br) &= \frac{\hbar}{8a_0^2}\Im \Braket{\br| PG_0QG_1P  |\br}.
		\end{align}    
	\end{subequations}
	
	\begin{figure}[t]
		\includegraphics[width=1\linewidth]{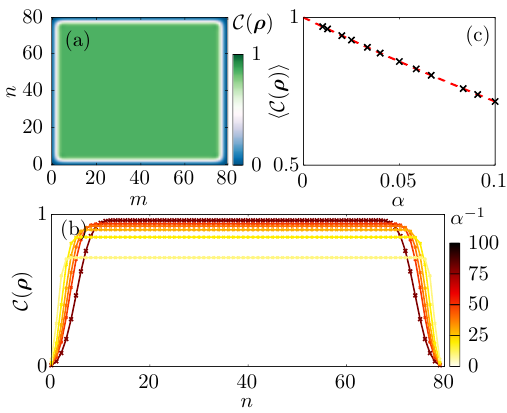}
		\caption{
			The Chern number is extracted using the Chern marker defined in Eq.~\eqref{eq:bulk_chern_marker_continuum} in the case of a non-interacting fermionic Harper-Hofstadter lattice with $N_x=N_y=80$ lattice sites in the isotropic case, $J_x=J_y=1$.
			(a) The Chern marker for the Harper-Hofstadter case for $N=280$ fermions filling up the lowest band at $\alpha=1/20$; a slice of the marker at constant $m=N_x/2$ is compared to the theoretical value Eq.~\eqref{eq:uniform_marker_value} in (b), for various values of $\alpha$.
			(c) Value of the marker averaged over a $10\times10$ square around $(N_x/2,N_y/2)$ (black crosses), as a function of the magnetic flux per plaquette $\alpha$, showing how it extrapolates to the Landau-orbit viscosity as $\alpha\rightarrow0$. The points have been fitted with a quadratic function of $\alpha$ (red dashes); as $\alpha\rightarrow0$, we find $\braket{\mathcal C(\br)}=0.9997(6)$.
		}
		\label{fig:chern_marker}
	\end{figure}

	\section{Chern marker decomposition}
	The Chern marker 
	\begin{equation}
		\mathfrak{C}(\br) = 4\pi\, \Im \bra{\br}PxQyP\ket{\br}
	\end{equation}
	is well-known to be a robust indicator of a non-interacting insulator's Chern number, even if $a_0/\ell_B$ is non-zero.
	In this section we consider lattice models which reduce to Landau levels in the continuum limit $\ell_B\gg a_0$, and analyze how the Chern marker decomposition in terms of guiding centers and kinetic momenta carries over to non-zero $a_0/\ell_B$.
	We will here see how, while in the continuum limit the bulk value of the Chern marker is entirely characterized by the kinetic momenta
	\begin{equation}
		\label{eq:bulk_chern_marker_continuum}
		\mathfrak{C}(\br) \,\xrightarrow[a_0\ll\ell_B] \,\mathcal C(\br) = -4\pi\,\frac{\ell_B^4}{\hbar^2}\, \Im \bra{\br}P\pi_yQ\pi_xP\ket{\br},
	\end{equation}
	when $a_0/\ell_B$ is non-zero deviations with respect to the expected bulk value occur.
	
	In particular, Eq.~\eqref{eq:bulk_chern_marker_continuum} is a consequence of the decomposition $r^i=R^i-\frac{\ell_B^2}{\hbar}\epsilon^{ij}\pi_j$ of the coordinates in terms of guiding centers and kinetic momenta: recalling that $Q(P)$ is a projector onto the empty (occupied) states, in the bulk of a non-interacting system only the $\pi$s can contribute since the 
	transitions induced by the
	$R$s are forbidden by Pauli exclusion: they indeed occur from an occupied state to another occupied one (we here skip the detailed analysis, since it parallels the one carried over in the next section). 
	
	The numerical results for Eq.~\eqref{eq:bulk_chern_marker_continuum} in the case of a fully-filled lowest band of the Harper-Hofstadter lattice Eq.~\eqref{eq:hh_hamiltonian} are shown in Fig.~\ref{fig:chern_marker}.
	It can be seen from panel $(a)$ that the marker $\mathcal C(\br)$ is uniform in the bulk; panels $(b)$ and $(c)$ show how the marker approaches the expected value when $2\pi\alpha=(a_0/\ell_B)^2$ decreases, but also demonstrate how the lattice discretization of the kinetic momenta $\pi_a$, now explicitly appearing in the marker Eq.~\eqref{eq:bulk_chern_marker_continuum}, introduces deviations with respect to the expected value $C_{\rm MB}=1$.
	
	\section{Rotating-saddle -- marker analysis}
	\label{smsec:marker_analysis}
	We here give two different derivations of Eq.~\eqref{eq:viscous_marker_real_space}.
	The first one is its direct calculation in the low-flux limit, 
	while the second one relies on the calculation of the power absorbed by a quantum Hall sample when subjected to a non-uniform electric field.
	
	\subsection{Landau levels calculation}
	We here perform the explicit calculation in the low-flux limit. To do so, we begin by introducing creation-annihilation operators which diagonalize the Hamiltonian in the continuum limit, Eq.~\eqref{eq:ll_limit}.
	\begin{subequations}
		We can decompose both the cyclotron-motion and guiding-center operators in terms of bosonic variables. In particular, for the former we write
		\begin{equation}
			\begin{split}
				\Pi_x &= \frac{\hbar}{\ell_B} \frac{a+a^\dagger}{\sqrt{2}}
				\\
				\Pi_y &= \frac{\hbar}{\ell_B} \frac{a-a^\dagger}{\sqrt{2}\,i}
			\end{split}
		\end{equation}
		while for the latter
		\begin{equation}
			\begin{split}
				\mR^x &= \ell_B\frac{b+b^\dagger}{\sqrt{2}}
				\\
				\mR^y &= -\ell_B\frac{b-b^\dagger}{\sqrt{2}\,i}.
			\end{split}
		\end{equation}
	\end{subequations}
	As anticipated, the $a$ and $b$ operators satisfy bosonic commutation relations, $[a,a^\dagger]=[b,b^\dagger]=1$.
	The single particle states can be labelled by the two associated quantum numbers $\ket{n,m}$: $n$ being the Landau-level index and $m$ the guiding center one.
	Since we focus on bulk properties, we consider an infinite non-interacting fermionic system consisting of $\mathcal C$ fully filled Landau levels 
	\begin{equation}
		\label{eq:iqh_ground_state} 
		\ket{\Psi_0}= \mathcal A\bigotimes_{n=0}^{\mathcal C-1} \bigotimes_{m=0}^\infty \ket{n,m},
	\end{equation}
	where $\mathcal A$ denotes the antisymmetrization operator.

	We then use the coordinate splitting $X^i=\mR^i-\frac{\ell_B^2}{\hbar}\epsilon^{ij}\Pi_j $ 
	to write
	\begin{equation}
		\label{eq:saddle_decomposition}
		\begin{split}
			\frac{(X^1)^2-(X^2)^2}{\ell_B^2} &= \mathcal D_1+\mathcal D_2 + \mathcal D_3
			\\
			\frac{2X^1 X^2}{\ell_B^2} &= \mathcal E_1 + \mathcal E_2 + \mathcal E_3,
		\end{split}
	\end{equation}
	where the operators $\mathcal D_i$ and $\mathcal E_j$ are listed in Tab.~\ref{tab:operators_list} together with all the possible quantum number changes.
	In the marker Eq.~\eqref{eq:viscous_marker_real_space}, every possible combination appears:
	\begin{subequations}
		\label{eq:marker_real_space_contributions}
		\begin{equation}
			\mathcal V_{\!*}(\br) = \sum_{ij}\frac{\hbar}{8a_0^2} \Im\, \mathcal K_{ij}
		\end{equation}
		where
		\begin{equation}
			\mathcal K_{ij} = P \mathcal D_i Q \mathcal E_j P.
		\end{equation}    
	\end{subequations}
	Notice how $\mathcal E$ must take an occupied state to an empty state, while $\mathcal D$ needs to take that same empty state to a second occupied state.
	
	\renewcommand{\arraystretch}{1.3}
	\begin{table}[h!]
		\centering
		\begin{tabular}{c|c|c|c|c}
			\hline
			\multicolumn{3}{c|}{Operator} & $\Delta n$ & $\Delta m$ \\
			\cline{2-3}
			\hline
			$\mathcal D_1$ & $- \frac{\ell_B^2}{\hbar^2} (\Pi_x^2-\Pi_y^2)$ &  $-(a^2+(a^\dagger)^2)$ & $\pm 2$ & $0$ \\
			$\mathcal D_2$ & $- \frac{2}{\hbar}(\mR^x\Pi_y+\mR^y\Pi_x)$ & $2i(a b^\dagger-a^\dagger b)$ & $ \pm 1$ & $ \pm 1$ \\
			$\mathcal D_3$ & $(\mR_x^2-\mR_y^2)/\ell_B^2$ & $b^2+(b^\dagger)^2 $ & $0$ & $\pm 2$ \\
			\hline
			$\mathcal E_1$ & $- \frac{\ell_B^2}{\hbar^2} (\Pi_x\Pi_y+\Pi_y\Pi_x)$ & $i(a^2-(a^\dagger)^2)$ & $\pm 2$ & $0$ \\
			$\mathcal E_2$ & $\frac{2}{\hbar}(\mR^x\Pi_x-\mR^y\Pi_y)$ & $2(a b^\dagger+a^\dagger b)$ & $\pm 1$ & $\pm 1$ \\
			$\mathcal E_3$ & $(\mR^x \mR^y+\mR^y \mR^x)/\ell_B^2$ & $i(b^2-(b^\dagger)^2)$ & $0$ & $\pm 2$ \\
			\hline
		\end{tabular}
		\caption{Operators appearing in the real-space viscosity marker in the continuum limit with the corresponding quantum number changes.}
		\label{tab:operators_list}
	\end{table}
	\renewcommand{\arraystretch}{1}
	There are $9$ such combinations. However, notice that all the terms involving a  quadratic form in the guiding-centers, i.e. the terms with either $i=3$ or $j=3$, produce transitions with $|\Delta m|=2$, $\Delta n=0$: they cannot connect an occupied state to an empty one due to the ground-state structure Eq.~\eqref{eq:iqh_ground_state} (if not at the system's edge).
	There are therefore only $4$ combinations one needs to keep track of.
	
	\subsubsection{The viscous term: $\mathcal K_{11}$}
	Since we assume all the levels below $\mathcal C$ (excluded) are occupied, $\mathcal K_{11}$ can be simplified to
	\begin{equation}
		\mathcal K_{11} = i \sum_{n=0}^{\mathcal C-1} \sum_{m=0}^{\infty} \ket{n,m}\bra{n,m} \sum_{n'=\mathcal C}^\infty\, \delta_{n',n+2} (n+1)(n+2).
	\end{equation}
	Only two terms contribute: $(n,n') = (\mathcal C-2,\mathcal C)$ and $(n,n')=(\mathcal C-1,\mathcal C+1)$; physically, the contributions coming $\Delta n=\pm2$ transitions can only come from the highest two Landau levels.
	Furthermore, since the Landau level density is independent of the Landau level index,
	\begin{equation}
		\sum_{m=0}^{\infty} |\braket{\boldsymbol{\rho}|n,m}|^2 = \frac{a_0^2}{2\pi\ell_B^2},
	\end{equation}
	the contribution to $\mathcal V_{\!*}(\br)$ is
	\begin{equation}
		\label{eq:k11}
		\frac{\hbar}{8a_0^2}\Im \braket{\boldsymbol{\rho}| \mathcal K_{11}  |\boldsymbol{\rho}} = \frac{\hbar}{8\pi\ell_B^2}\mathcal C^2 = \overline \eta_H.
	\end{equation}
	As expected, from the term which is purely quadratic in the $\Pi$s we recover the Hall viscosity.
	
	\subsubsection{The Hall term: $\mathcal K_{22}$}
	Carrying out the same math, we obtain
	\begin{equation}
		\mathcal K_{22}= 4i\,\mathcal C\sum_{m=0}^{\infty}m\,\ket{\mathcal C-1,m}\bra{\mathcal C-1,m}. 
	\end{equation}
	The contribion to the marker is then given by
	\begin{equation}
		\label{eq:k22}
		\begin{split}
			\frac{\hbar}{8a_0^2}\Im \braket{\boldsymbol{\rho}| \mathcal K_{22}  |\boldsymbol{\rho}} &= \frac{\hbar}{8a_0^2}4\mathcal C \sum_{m=0}^{\infty} m |\!\braket{\boldsymbol{\rho}|\mathcal C-1,m}|^2 =\\& \frac{\hbar}{8\pi\ell_B^2}\mathcal C \Bigl(2(\mathcal C-1)+\frac{\rho^a g_{ab}\,\rho^b}{\ell_B^2}\Bigr).
		\end{split}
	\end{equation}
	Notice the presence of a $\propto \mathcal C(\mathcal C-1)$ constant offset: this gets canceled by the $\mathcal K_{12}$ and $\mathcal K_{21}$ terms.
	
	\subsubsection{The mixed terms: $\mathcal K_{12}$ and $\mathcal K_{21}$}
	In this case we obtain
	\begin{equation}
		\begin{split}
			\mathcal K_{12} &= 
			- 2\mathcal C \sqrt{\mathcal C-1}\sum_{m=1}^\infty\sqrt{m}\,\ket{\mathcal C-2,m-1}\bra{\mathcal C-1,m}  
		\end{split}
	\end{equation}
	and $\mathcal K_{21} = - \mathcal K_{12}^\dagger$.
	Performing the summations we get, for the marker contributions,
	\begin{equation}
		\label{eq:k12_k21}
		\frac{\hbar}{8a_0^2}\Im \braket{\boldsymbol{\rho}| \mathcal K_{12}  |\boldsymbol{\rho}} =     \frac{\hbar}{8a_0^2}\Im \braket{\boldsymbol{\rho}| \mathcal K_{21}  |\boldsymbol{\rho}} = - \frac{\hbar}{8\pi\ell_B^2} \mathcal C (\mathcal C-1).
	\end{equation}
	Notice how these two contributions, taken together, exactly cancel out the offset term in Eq.~\eqref{eq:k22}.
	
	\subsubsection{The final result}
	Summing according to Eq.~\eqref{eq:marker_real_space_contributions} the results of Eq.~\eqref{eq:k11}, ~\eqref{eq:k22}, ~\eqref{eq:k12_k21} we obtain the final expression for the marker
	\begin{equation}
		\mathcal V_{\!*}(\br) = \overline \eta_H +  \frac{\hbar}{8\pi \ell_B^4}\mathcal C\, \rho^a g_{ab}\,\rho^b
	\end{equation}
	which is the real-space marker result Eq.~\eqref{eq:viscous_marker_real_space} quoted in the main text.

	\subsection{Power absorption from an inhomogeneous probe}
	In the section we instead demonstrate Eq.~\eqref{eq:viscous_marker_real_space} from semiclassical arguments.
	For simplicity we work in the Euclidean case, assuming isotropy and rotational invariance.
	
	In this case the rotating saddle potential of Eq.~\eqref{eq:rotating_saddle} becomes
	\begin{equation}
		V_\pm= \Re\left[ v_0\, e^{-i \omega t} \frac{(x^2-y^2)\mp i 2xy}{\ell_B^2}\right].
	\end{equation}
	We can associate a complex electric field to this potential; it reads
	\begin{equation}
		\label{eq:probe_field}
		\begin{split}
			{\bf E}_\pm = -\frac{1}{e}{\boldsymbol{\nabla}} V_\pm = 2\frac{v_0}{e\ell_B^2} (x\mp i y) 
			{\boldsymbol{\epsilon}}_\mp
			e^{-i\omega t};
		\end{split}
	\end{equation}
	the “real” electric field is just its real part. Here we introduced a polarization vector ${\boldsymbol{\epsilon}}_\pm = 
	\begin{psmallmatrix}
		1\\
		\pm i
	\end{psmallmatrix}$.
	
	This field must be delivering the energy needed for the transitions occurring in the system. If $P_\pm(\omega)$ is the power absorbed at frequency $\omega$, the dichroic transition rates can be expressed as~\cite{BennettStern,goldman2024relating}
	\begin{equation}
		\Delta\Gamma^{\rm int} = \int_0^\infty\frac{\Delta P(\omega)}{\hbar\omega}d\omega,
	\end{equation}
	where $\Delta P(\omega)=P_+(\omega)-P_-(\omega)$.
	
	Our goal here is therefore to compute the power absorbed by the system at frequency $\omega$, time-averaged over a time-window $T=\frac{2\pi}{\omega}$ and integrated over the whole system. 
	There are two contributions:
	\begin{subequations}
		an electromagnetic one, due to Joule heating
		\begin{equation}
			\label{eq:joule_effect}
			\begin{split}
				P_{\rm Joule,\pm}(\omega) &= \int d^2{\bf r} \int \frac{dt}{T} \Re({\bf J})\cdot \Re({\bf E_{\pm}}) =
				\\ &
				\frac{1}{2} \int d^2{\bf r}\, \Re({\bf J}\cdot {\bf E}_{\pm}^*)
			\end{split}
		\end{equation}
		and a fluid-dynamics one, namely the work done by the internal stresses
		\begin{equation}
			\label{eq:stress_heating}
			\begin{split}
				P_{\rm stress,\pm}(\omega) &= -\int d^2{\bf r} \int \frac{dt}{T} \Re(\Sigma^{a}_b)\cdot \Re(\dot u_{a}^b) =
				\\ &
				-\frac{1}{2} \int d^2{\bf r}\, \Re(\Sigma^{a}_b\,\dot u_{a}^{b\,*})
			\end{split}
		\end{equation}
	\end{subequations}
	where 
	\begin{equation}
		\label{eq:stress_tensor_expansion}
		\Sigma^a_b({\bf r},t)=-\int d^2{\bf r'}\,\widetilde\eta^{aecf}({\bf r}-{\bf r}',t-t')\epsilon_{eb}\epsilon_{fd}\dot u^{d}_c({\bf r}',t')
	\end{equation}
	is the stress tensor~\cite{Park_2014} (cf. Eq.~\eqref{eq:stress_tensor}) and $\dot u^{a}_b$ the strain tensor variation
	\begin{equation}
		\label{eq:strain_variation}
		\dot u^{a}_b = \frac{1}{2}\left(\frac{\partial v^a}{\partial x^b}+\frac{\partial v^b}{\partial x^a}\right);
	\end{equation}
	the velocity field $v^a$ is here determined by the external potential through the guiding-center drift relation
	\begin{equation}
		v^a = - \frac{1}{B}\epsilon^{ab} E_{\pm,b}.
	\end{equation}
	
	\subsubsection{The electromagnetic contribution}
	The current ${\bf J}$ is set by the Ohmic linear response relation
	\begin{equation}
		\label{eq:hall_response}
		J_i(\boldsymbol{r},t) = \int d^2{\bf r}' \int dt' \sigma_{ij}({\bf r}-{\bf r}',t-t') E_j({\bf r}', t').
	\end{equation}
	The conductance tensor $\sigma_{ij}$ is the macroscopic information on the bulk's response, and is set by the particular topological order analyzed.
	Plugging the field Eq.~\eqref{eq:probe_field} and the current Eq.~\eqref{eq:hall_response} into the power formula Eq.~\eqref{eq:joule_effect} we get (for simplicity, and without loss of generality, we take $v_0$ to be real)
	\begin{equation}
		\label{eq:joule_power_tmp}
		\begin{split}
			P_{\rm Joule,\pm}(\omega) =
			2 \frac{v_0^2}{e^2\ell_B^4} &\int \frac{d^2{\bf q}}{4\pi^2}\int d^2{\bf r} \int d^2{\bf r}' \,\Re\Biggl[\epsilon_{\mp, i}^*\epsilon_{\mp, j}
			\\& (x'\mp i y')(x\pm i y) \sigma_{ij}({\bf q},\omega) e^{i {\bf q}\cdot ({\bf r}-{\bf r}')}
			\Biggr]		 
		\end{split}
	\end{equation}
	where $\sigma_{ij}({\bf q},\omega)$ is the Fourier transform of $\sigma_{ij}({\bf r},t)$
	\begin{equation}
		\sigma_{ij}({\bf q},\omega) = \int dt' \int d^2{\bf r} \,\sigma_{ij}({\bf r},t) e^{i\omega t} e^{-i {\bf q}\cdot ({\bf r}-{\bf r}')};
	\end{equation}
	thanks to the symmetry assumptions the conductance tensor is constrained as
	\begin{equation}
		\sigma = \begin{pmatrix}
			\sigma_{xx} & \sigma_{xy}\\
			- \sigma_{xy} & \sigma_{xx}
		\end{pmatrix};
	\end{equation}
	using this result, the contraction of the conductance tensor with the polarization vectors simplifies
	\begin{equation}
		\begin{split}
			\epsilon_{\mp, i}^*\epsilon_{\mp, j} \sigma_{ij}({\bf q},\omega) =& 2(\sigma_{xx}({\bf q},\omega)-i \sigma_{xy}({\bf q},\omega)) \equiv\\\equiv& 2F_H({\bf q},\omega).
		\end{split}
	\end{equation}
	We then write 
	\begin{equation}
		(x'\mp i y') e^{-i {\bf q}\cdot {\bf r}'} = \left(i\frac{\partial}{\partial q_x}\pm \frac{\partial}{\partial q_x}\right)e^{-i {\bf q}\cdot {\bf r}'} 
	\end{equation}
	and integrate by parts; since under our symmetry assumptions $F_H(-{\bf q},\omega)=F_H({\bf q},\omega)$ is an even function of ${\bf q}$, we get
	\begin{equation}
		\label{eq:joule_power_tmp2}
		\begin{split}
			P_{\rm Joule,\pm}(\omega) =
			4 \frac{v_0^2}{e^2\ell_B^4} &\int d^2{\bf r} \,\Re\Bigl[F_H({\bf 0},\omega) \left(x^2 + y^2\right) 
			\Bigr].
		\end{split}
	\end{equation}
	The dichroic power therefore reads
	\begin{equation}
		\label{eq:joule_dichroic_power}
		\begin{split}
			\Delta P_{\rm Joule}(\omega) =
			8 \frac{v_0^2}{e^2\ell_B^4} &\int d^2{\bf r} \,\Im\Bigl[\sigma_{xy}({\bf 0},\omega) \,{\bf r}^2
			\Bigr]
		\end{split}
	\end{equation}
	and the integrated rates difference becomes, using the Kramers-Kronig relations
	\begin{equation}
		\label{eq:joule_rates_tmp}
		\begin{split}
			\Delta \Gamma_{\rm Joule}^{\rm int} =
			4\pi \frac{v_0^2}{\hbar e^2\ell_B^4} \,\Re\Bigl[\sigma_{xy}({\bf 0},0)\Bigr] \int d^2{\bf r}\left(x^2 + y^2\right).
		\end{split}
	\end{equation}
	The small wavevector expansion of the Hall conductivity $\sigma_{xy}({\bf q},0)$ is known~\cite{HoyosSon}
	\begin{equation}
		\label{eq:hall_response_viscous_corrections}
		\begin{split}
			\sigma_{xy}(\mathbf{q},\omega=0) &= \frac{e^2}{\hbar}\frac{\mathcal C}{2\pi}\left(1 + \mathcal O({\bf q}^2) \right).
		\end{split}
	\end{equation}    
	The $\mathcal O({\bf q}^2)$ term contains information on the Hall viscosity; however, since the electric field Eq.~\eqref{eq:probe_field} is linear, all its second derivatives vanish; therefore, compatibly with Eq.~\eqref{eq:joule_rates_tmp}, no viscous correction to the current emerges. The final result reads
	\begin{equation}
		\label{eq:rate_contribution_joule}
		\begin{split}
			\Delta \Gamma^{\rm int}_{\rm Joule} =
			v_0^2\frac{16\pi}{\hbar^3} \,\int \left(
			\frac{\hbar}{8\pi\ell_B^4} \mathcal C \,{\bf r}^2 \right)d^2{\bf r}.
		\end{split}
	\end{equation}
	Notice how this ${\bf r}^2$ term, stemming from the Hall response, recovers perfectly the analogous term in the local rate expansion Eq.~\eqref{eq:viscous_marker_real_space}, upon taking the continuum limit $\sum_{\bf r} a_0^2 \simeq \int d^2{\bf r}$ to allow the complete identification of this equation with the marker decomposition of $\Delta\Gamma^{\rm int}_*$. 
	However, notice also how this is completely missing the viscous term; this latter comes from the fact that the electromagnetic perturbation Eq.~\eqref{eq:probe_field} is working against the internal stresses of the quantum Hall fluid, as we are now going to demonstrate.
	
	\subsubsection{The stress-tensor contribution}
	We begin by inspecting the strain tensor variation $\dot u_{ab}$. In the presence of the probe field Eq.~\eqref{eq:probe_field}, Eq.~\eqref{eq:strain_variation} takes the form
	\begin{equation}
		\begin{split}
			\dot u^{0}_0 =& - \dot u_{1}^{1} = \pm i\,\frac{2v_0}{e\ell_B^2B} e^{-i \omega t}
			\\
			\dot u^{0}_1 =& + \dot u_{0}^{1} = \frac{2v_0}{e\ell_B^2B} e^{-i \omega t},
		\end{split}
	\end{equation}
	which is position independent. For simplicity we single out the time-dependence $u_a^b = e^{-i \omega t}\, \tilde u_a^b$.
	
	Together with Eq.~\eqref{eq:stress_heating} and ~\eqref{eq:stress_tensor_expansion}, we get
	\begin{equation}
		\begin{split}
			P_{\rm stress,\pm}(\omega) =&
			\frac{1}{2} \int \frac{d^2{\bf q}}{4\pi^2}\int d^2{\bf r}\int d^2{\bf r}' \,e^{i {\bf q}\cdot ({\bf r}-{\bf r}')} \times\\&\times \Re\Biggl[\widetilde\eta^{aecf}({\bf q}, \omega) \, \epsilon_{eb}\epsilon_{fd} \tilde u_{c}^d(\tilde u_{a}^b)^* \Biggr]
		\end{split}
	\end{equation}
	where we introduced the Fourier transform
	\begin{equation}
		\widetilde\eta^{aecf}({\bf q}, \omega) = \int dt'\int d^2{\bf r}\,\widetilde\eta^{aecf}({\bf r}, t) e^{i \omega t} e^{-i {\bf q}\cdot {\bf r}}.
	\end{equation}
	Under our symmetry assumptions, the viscosity tensor decomposes as~\cite{Bradlyn_2012}
	\begin{equation}
		\begin{split}
			\widetilde\eta^{aecf} =& \zeta \delta^{ae}\delta^{cf} +\\+& \eta (\delta^{ac}\delta^{ef}+\delta^{af}\delta^{ec}-\delta^{ae}\delta^{cf}) +\\+&
			\overline\eta_H \frac{1}{2}\left(\epsilon^{ac}\delta^{ef}+\epsilon^{af}\delta^{ec}+a\leftrightarrow e\right)
		\end{split}
	\end{equation}
	where $\zeta$ and $\eta$ are the bulk and shear viscosity, respectively.
	Using this result, the contraction of the viscosity tensor with the strain variations simplifies considerably:
	\begin{equation}
		\begin{split}
			\widetilde\eta^{aecf}({\bf q}, \omega) \, \epsilon_{eb}\epsilon_{fd} \tilde u_{c}^d(\tilde u_{a}^b)^* =& \frac{32 v_0^2}{e^2\ell_B^4B^2} ( \eta({\bf q}, \omega)  \mp i \overline\eta_H({\bf q}, \omega)  ) \\\equiv& \frac{32 v_0^2}{\hbar^2} F_V({\bf q}, \omega).
		\end{split}
	\end{equation}
	Using this result we get
	\begin{equation}
		\begin{split}
			P_{\rm stress,\pm}(\omega) =&
			\frac{16 v_0^2}{\hbar^2} \int d^2{\bf r} \, \Re\Bigl[  F_V({\bf 0}, \omega) \Bigr].
		\end{split}
	\end{equation}
	While both the Hall and shear viscosities appear here, the dichroic power absorption is associated purely to the non-dissipative Hall one
	\begin{equation}
		\begin{split}
			\Delta P_{\rm stress}(\omega) =&
			\frac{32 v_0^2}{\hbar^3} \int d^2{\bf r} \, \Im\Bigl[  \overline\eta_H({\bf 0}, \omega)  \Bigr].
		\end{split}
	\end{equation}
	The viscous stress contribution to the transitions rates can then finally be obtained using the Kramers-Kronig relations and the fact that
	$\Re\Bigl[  \overline\eta_H({\bf 0}, 0)  \Bigr]=\overline\eta_H$:
	\begin{equation}
		\label{eq:rate_contribution_stress}
		\Delta \Gamma^{\rm int}_{\rm stress} =
		\frac{16\pi}{\hbar^3} v_0^2 \int \overline\eta_H\, d^2{\bf r}.
	\end{equation}
	This contribution, stemming from the viscous response, recovers the analogous term in the local rate expansion Eq.~\eqref{eq:viscous_marker_real_space} upon taking the continuum limit $\sum_{\bf r} a_0^2 \simeq \int d^2{\bf r}$.
	
	\subsubsection{The total rates}
	Adding up the Joule-heating Hall contribution Eq.~\eqref{eq:rate_contribution_joule}, the viscous stress one Eq.~\eqref{eq:rate_contribution_stress} we get
	\begin{equation}
		\begin{split}
			\Delta \Gamma^{\rm int}_* =& \Delta \Gamma^{\rm int}_{\rm Joule} + \Delta \Gamma^{\rm int}_{\rm stress} = \\=&
			\frac{16\pi}{\hbar^3}\,v_0^2 \int \left(
			\frac{\hbar}{8\pi\ell_B^2} \mathcal C \,{\bf r}^2 +\overline\eta_H\right)d^2{\bf r}.
		\end{split}
	\end{equation}
	This classical energy-balance calculation allows one to recover exactly the quantum-mechanical spatially resolved transition rates Eq.~\eqref{eq:viscous_marker_real_space}.
	\section{Markers and Berry curvature}
	\label{sec: markers and Berry curvature}
	The purpose of this section is to show that the local markers, such as the local Chern marker and the local Hall viscosity marker introduced in Eq.~\eqref{eq:viscosity_marker}, and the one introduced later through the approach of the rotating saddle in Eq.~\eqref{eq:viscous_marker_real_space}, arise from Berry curvatures, associated to varying a Slater determinant state.
	
	Let us assume we have a Slater determinant state
	\begin{align}
		\ket{\psi}= \bigwedge_{i=1}^{N_e} \ket{\phi_i},    
	\end{align}
	where $\ket{\phi_i}$'s are single-particle states and $N_e$ is the particle number.
	Define single-particle orthogonal projection operators by
	\begin{align}
		P = \sum_{i=1}^{N_e} |\phi_i\rangle\langle \phi_i|,
		\qquad\text{and}\qquad
		Q = 1-P.    
	\end{align}
	
	Consider Hermitian operators $\{X_\mu\}$ acting on single-particle Hilbert space. We have a family of deformed states by taking 
	\begin{align}
		\ket{\psi(\lambda)} =U(\lambda)\ket{\psi}:=\bigwedge_{i=1}^{N_e} \left(e^{i\lambda^\mu X_\mu}\ket{\phi_i}\right),
	\end{align}
	where we have introduced a deformation parameter $\lambda^{\mu}$ for each generator $X_{\mu}$, and set $\lambda=(\lambda^{\mu})$. The operator $U(\lambda)$ is the natural many-body extension of the single-particle unitary operator $e^{i\lambda^{\mu}X_{\mu}}$ and is unitary.
	
	Consider then the Berry curvature in the parameter space determined by $\lambda$, $F=dA=(1/2)F_{\mu\nu}d\lambda^{\mu}\wedge d\lambda^{\nu}$, with $A=\bra{\psi(\lambda)}d\ket{\psi(\lambda)}$. We have the following result
	\begin{proposition}
		The Berry curvature at $\lambda=0$ is
		\[
		F_{\mu\nu} =\operatorname{Tr}(P X_\mu Q X_\nu P)-\operatorname{Tr}(P X_\nu Q X_\mu P).
		\]
	\end{proposition}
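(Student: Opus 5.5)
Since $U(\lambda)$ is unitary, the curvature at $\lambda=0$ is most easily computed from
\[
F_{\mu\nu}=\partial_\mu A_\nu-\partial_\nu A_\mu=\braket{\partial_\mu\psi|\partial_\nu\psi}-\braket{\partial_\nu\psi|\partial_\mu\psi}.
\]
Here $A_\mu=\braket{\psi(\lambda)|\partial_\mu\psi(\lambda)}$, and the mixed second-derivative terms $\braket{\psi|\partial_\mu\partial_\nu\psi}$ cancel in the antisymmetrization. So it suffices to compute first derivatives of $\ket{\psi(\lambda)}$ at $\lambda=0$.

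\textbf{First derivatives.} Let $d\Gamma(X)$ denote the second-quantized one-body operator, i.e.\ the derivation acting on the wedge product. Differentiating $e^{i\lambda^\mu X_\mu}$ term by term in the wedge product gives
\[
\ket{\partial_\mu\psi}=i\sum_{j}\phi_1\wedge\cdots\wedge X_\mu\phi_j\wedge\cdots\wedge\phi_{N_e}=i\,d\Gamma(X_\mu)\ket{\psi}.
\]
Some care is needed here, since $e^{i\lambda^\mu X_\mu}$ with noncommuting $X_\mu$ is not a product of exponentials. This does not matter at first order at $\lambda=0$.

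\textbf{Splitting into $P$ and $Q$ parts.} Insert $X_\mu\phi_j=PX_\mu\phi_j+QX_\mu\phi_j$ in each slot. The $P$ part only reshuffles occupied orbitals within the determinant, so it contributes a multiple of $\ket{\psi}$, namely $\operatorname{Tr}(PX_\mu P)\ket{\psi}$. The $Q$ part produces single-particle--hole excitations $\ket{\psi_{j\to\beta}}$, where $\phi_j$ is replaced by an empty orbital $\beta$. These excitations are orthonormal and orthogonal to $\ket{\psi}$. Hence
\[
\ket{\partial_\mu\psi}=i\operatorname{Tr}(PX_\mu)\ket{\psi}+i\sum_{j,\beta}\braket{\beta|X_\mu|\phi_j}\ket{\psi_{j\to\beta}}.
\]

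\textbf{Overlaps.} Taking inner products,
\[
\braket{\partial_\mu\psi|\partial_\nu\psi}=\operatorname{Tr}(PX_\mu)\operatorname{Tr}(PX_\nu)+\sum_{j,\beta}\braket{\phi_j|X_\mu|\beta}\braket{\beta|X_\nu|\phi_j}.
\]
The first term is real and symmetric in $\mu\nu$, so it cancels in $F_{\mu\nu}$. The second term equals $\operatorname{Tr}(PX_\mu QX_\nu P)$. Antisymmetrizing then gives the claim. An equivalent route avoids the wedge bookkeeping: use the Wick/Slater rule that for a Slater determinant the connected correlator is $\braket{d\Gamma(X_\mu)d\Gamma(X_\nu)}-\braket{d\Gamma(X_\mu)}\braket{d\Gamma(X_\nu)}=\operatorname{Tr}(PX_\mu QX_\nu)$.

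\textbf{Main obstacle.} The main difficulty is the orthonormality bookkeeping in the third step. One must check that the $P$ components contribute exactly $\operatorname{Tr}(PX_\mu)\ket{\psi}$, with no extra terms from occupied-to-occupied terms where $i\neq j$. This holds because the wedge product vanishes when an orbital is repeated. One must also check that distinct particle--hole states are orthonormal. The convention for $F$ needs attention too: with $A=\braket{\psi|d\psi}$ there is no factor of $i$, and the resulting expression is purely imaginary. This is consistent with the stated formula, whose right-hand side equals $2i\,\Im\operatorname{Tr}(PX_\mu QX_\nu P)$.
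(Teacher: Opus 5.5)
Your proof is correct and follows essentially the paper's argument. Both reduce the curvature at $\lambda=0$ to the antisymmetrized two-point function $\langle\psi|\widetilde X_\mu\widetilde X_\nu|\psi\rangle$ of the second-quantized generators, whose connected part for a Slater determinant is $\operatorname{Tr}(PX_\mu QX_\nu P)$ while the symmetric disconnected piece $\operatorname{Tr}(PX_\mu)\operatorname{Tr}(PX_\nu)$ drops out. The only differences are cosmetic: the paper uses the Maurer--Cartan identity $d(U^{-1}dU)=-U^{-1}dU\wedge U^{-1}dU$ and explicit double sums over orbitals, whereas you use $F_{\mu\nu}=\langle\partial_\mu\psi|\partial_\nu\psi\rangle-(\mu\leftrightarrow\nu)$ and an orthonormal particle--hole decomposition, which is, if anything, cleaner bookkeeping.
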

	
	\begin{proof}
		We first write:
		\begin{align}
			|\psi(\lambda)\rangle = U(\lambda)|\psi\rangle.
		\end{align}
		Then the Berry connection is:
		\begin{align}
			A = \langle \psi(\lambda)| d |\psi(\lambda)\rangle = \langle \psi|U^{-1}dU|\psi\rangle.
		\end{align}
		The Berry curvature is:
		\begin{align}
			F=dA=\langle \psi| d(U^{-1}dU)|\psi\rangle.
		\end{align}
		Now using $dU^{-1}=-U^{-1}dUU^{-1}$, we can write
		\begin{align}
			d(U^{-1}dU)=dU^{-1}\wedge dU
			=-U^{-1}dU\wedge U^{-1}dU.
		\end{align}
		So:
		\begin{align}
			F&=-\langle \psi|U^{-1}dU\wedge U^{-1}dU|\psi\rangle
			\nonumber \\
			&=-\langle \psi|U^{-1}dU\,(1-|\psi\rangle\langle\psi|)\,U^{-1}dU|\psi\rangle,
		\end{align}
		where the last step follows from 
		\begin{align}
			\langle\psi|U^{-1}dU|\psi\rangle\wedge \langle\psi|U^{-1}dU|\psi\rangle =0
		\end{align}
		(due to 1-forms squaring to $0$). 
		
		At $\lambda=0$, $U=1$, and
		\begin{align}
			dU=d\bigl(1+i\lambda^\mu \widetilde X_\mu+\cdots\bigr)
			=id\lambda^\mu \,\widetilde X_\mu,
		\end{align}
		where
		\begin{align}
			\widetilde X_\mu\left(\bigwedge_{i=1}^{N_e}|\phi_i\rangle\right)
			:=\sum_{i=1}^{N_e} |\phi_1\rangle\wedge\cdots\wedge (X_\mu|\phi_i\rangle)\wedge\cdots\wedge |\phi_{N_e}\rangle.
		\end{align}
		
		So:
		\[
		F=\langle\psi|\widetilde X_\mu(1-|\psi\rangle\langle\psi|)\widetilde X_\nu|\psi\rangle \, d\lambda^\mu\wedge d\lambda^\nu.
		\]
		
		Now:
		\begin{align}
			\langle\psi|\widetilde X_\mu|\psi\rangle = \sum_{i=1}^{N_e} \langle\phi_i|X_\mu|\phi_i\rangle
		\end{align}
		and
		\begin{align}
			\langle\psi|\widetilde X_\mu\widetilde X_\nu|\psi\rangle
			&=\sum_{i=1}^{N_e}\langle\phi_i|X_\mu X_\nu|\phi_i\rangle
			\nonumber \\
			&-\sum_{i\neq j}\langle\phi_i|X_\mu|\phi_j\rangle\langle\phi_i|X_\nu|\phi_i\rangle \nonumber \\
			&+\sum_{i\neq j}\langle\phi_i|X_\mu|\phi_i\rangle\langle\phi_j|X_\nu|\phi_j\rangle.
		\end{align}
		Also,
		\begin{align}
			\langle\psi|\widetilde X_\mu|\psi\rangle\langle\psi|\widetilde X_\nu|\psi\rangle
			&=\sum_{i\neq j}\langle\phi_i|X_\mu|\phi_i\rangle\langle\phi_j|X_\nu|\phi_j\rangle
			\nonumber \\
			&+\sum_{i=1}^{N_e}\langle\phi_i|X_\mu|\phi_i\rangle\langle\phi_i|X_\nu|\phi_i\rangle.
		\end{align}
		
		So:
		\begin{align}
			&\langle\psi|\widetilde X_\mu\widetilde X_\nu|\psi\rangle
			-\langle\psi|\widetilde X_\mu|\psi\rangle\langle\psi|\widetilde X_\nu|\psi\rangle \nonumber \\
			&=\sum_{i=1}^{N_e}\langle\phi_i|X_\mu X_\nu|\phi_i\rangle
			-\sum_{i=1}^{N_e}\sum_{j=1}^{N_e}\langle\phi_i|X_\mu|\phi_j\rangle\langle\phi_j|X_\nu|\phi_i\rangle \nonumber \\
			&=\sum_{i=1}^{N_e}\langle\phi_i|X_\mu\left(1-\sum_{j=1}^{N_e}|\phi_j\rangle\langle\phi_j|\right)X_\nu|\phi_i\rangle
			\nonumber \\
			&=\operatorname{Tr}(P X_\mu Q X_\nu P).
		\end{align}
		
		Thus
		\begin{align}
			F&=\operatorname{Tr}(P X_\mu Q X_\nu P)\,d\lambda^\mu\wedge d\lambda^\nu
			\nonumber \\
			&=\frac{1}{2}\Bigl(\operatorname{Tr}(P X_\mu Q X_\nu P)-\operatorname{Tr}(P X_\nu Q X_\mu P)\Bigr)
			\,d\lambda^\mu\wedge d\lambda^\nu.
		\end{align}
		as claimed.
	\end{proof} 
	Now, if the system is a lattice, with single-particle Hilbert space generated by states $\ket{\bm{\rho}}$, with $\bm{\rho}$ a vector labeling points on the lattice, we can write
	\begin{align}
		F_{\mu\nu} &=\tr(P X_\mu Q X_\nu P)-\tr(P X_\nu Q X_\mu P)\nonumber \\
		&=2i \; \operatorname{Im}\tr(PX_{\mu}QX_{\nu}P)\nonumber \\
		&= 2i \sum_{\bm{\rho}} \operatorname{Im}\bra{\bm{\rho}}PX_{\mu}QX_{\nu}P\ket{\bm{\rho}}.
	\end{align}
	Now the function under the summation sign has the form of the local markers (up to a multiplicative constant independent of $\bm{\rho}$), for different choices of generators of deformations. For example if $X_{\mu}$ are the position operators, we recover, the local Chern markers. If  the $X_{\mu}$ are taken to be $G_0,G_1$, we recover the local viscosity marker, and if they are taken to be $\mathcal{Q}_0$ and $\mathcal{Q}_1$, we recover the local viscosity marker corresponding to rotating saddle approach.
	
	\section{Frequency selection}
	We give here a more detailed explanation of the frequency resolved measurement briefly presented in the main text.
	
	In particular, we focus on the low-energy, small magnetic field $\ell_B\gg a_0$ limit, 
	where the lower bands of the Harper-Hofstadter Hamiltonian Eq.~\eqref{eq:hh_hamiltonian} become roughly flat, with the $n-$th band sitting at energy $E_n\simeq -2(J_x+J_y)+\hbar\omega_c\left(n+\frac12\right)+\mathcal{O}(\alpha^2)$
	(cf. Eq.~\eqref{eq:hh_limit_sm}).
	In this same limit, the analysis carried out in Sec.~\ref{smsec:marker_analysis} holds. 
	In particular, 
	the position operators appearing in the real space rotating-saddle probe Eq.~\eqref{eq:rotating_saddle} (and thus also in the associated marker Eq.~\eqref{eq:viscous_marker_real_space}) can be split according to Eq.~\eqref{eq:saddle_decomposition}:
	it can be seen from the table~\ref{tab:operators_list} that the transitions associated to the Hall viscosity (induced by $\mathcal D_1$ and $\mathcal E_1$) satisfy $|\Delta n|=2$ and thus have energy $\Delta E_V\simeq2\hbar\omega_c$; on the other hand, the transitions associated to the Hall contribution (induced by $\mathcal D_2$ and $\mathcal E_2$) satisfy $|\Delta n|=|\Delta m|=1$ and therefore have energy $\Delta E_H\simeq\hbar\omega_c = \frac12\Delta E_V$.
	This difference is what allows for the frequency separation of the two contributions.
	
	We conclude the section by analyzing how (weak) lattice effects modify this picture. 
	
	We begin by noticing how the matrix elements $\braket{\beta|\mathcal Q_0\mp i \mathcal Q_1|\alpha}$ appearing in the Fermi's golden rule expression for the transition rates for the non-interacting system
	\begin{equation}
		\begin{split}
			\Gamma_\pm(\omega) &= 2\pi \frac{|v_0|^2}{4\hbar^2}\sum_{\alpha\in {\rm occ.}}\sum_{\beta\in{\rm emp.}} |\braket{\beta|\mathcal Q_0\mp i \mathcal Q_1|\alpha}|^2\delta(\omega_{\beta,\alpha}-\omega)
		\end{split}
	\end{equation}
	can be written as
	\begin{equation}
		\label{eq:matrix_elements_decomposition}
		\braket{\beta|\mathcal Q_0\mp i \mathcal Q_1|\alpha} =
		\ell_B^2\sum_{j=1}^3\braket{\beta|\mathcal D_j\mp i \mathcal E_j|\alpha}.
	\end{equation}
	In the following, we will neglect the edge contributions arising from the $\mathcal D_3$ and $\mathcal E_3$ operators for simplicity.
	
	We notice how Eq.~\eqref{eq:matrix_elements_decomposition} should be computed using the eigenstates $\ket{\alpha}$ of Eq.~\eqref{eq:hh_magnetic_translations}, which includes the effects of the underlying lattice;
	in this regard we notice how the Hamiltonian Eq.~\eqref{eq:hh_magnetic_translations} has a conserved quantity:
	we can indeed introduce the following parity operator
	\begin{equation}
		\mathcal P = e^{i \pi a^\dagger a};
	\end{equation}
	notice how this operator, in the low-flux limit $\ell_B\gg a_0$, simply counts the band index $n$.
	Notice also how this operator satisfies
	\begin{equation}
		\mathcal P^\dagger \Pi_{x(y)} \mathcal P = - \Pi_{x(y)};
	\end{equation}
	as a consequence, $\mathcal P^\dagger T_{(a_x,a_y)} \mathcal P=T^\dagger_{(a_x,a_y)}$ and thus
	\begin{equation}
		[H,\mathcal P] = 0.
	\end{equation}
	It follows that the eigenstates of $H$ can be chosen to have a well-definite parity. 
	
	It follows that the matrix elements
	\begin{equation}
		\braket{\beta|\mathcal D_1\mp i \mathcal E_1|\alpha},
	\end{equation}
	which as discussed in Sec.~\ref{smsec:marker_analysis} contain the viscous contribution to Eq.~\eqref{eq:viscous_marker_real_space}, 
	are potentially non-vanishing only if $\ket{\alpha}$ and $\ket{\beta}$ have the same parity: 
	as a consequence, the associated Bohr frequencies $\omega_{\beta,\alpha}$ are (roughly) even multiples of $\hbar\omega_c$.
	
	On the other hand the matrix elements
	\begin{equation}
		\braket{\beta|\mathcal D_2\mp i \mathcal E_2|\alpha},
	\end{equation}
	which encode the Hall contribution instead, 
	are potentially non-zero only if $\ket{\alpha}$ and $\ket{\beta}$ have opposite parity: 
	as a consequence, the associated Bohr frequencies $\omega_{\beta,\alpha}$ are (roughly) odd multiples of $\hbar\omega_c$.
	
\end{supplementalMaterials}

\end{document}